\newcommand{\bix}{\boldsymbol{x}}
\newcommand{\biy}{\boldsymbol{y}}
\pgfplotsset{compat=newest}
\renewcommand{\mid}{\,:\,}
\theoremstyle{plain}
\newtheorem{theorem}     {Theorem}
\newtheorem{lemma}       {Lemma}
\newtheorem{definition}  {Definition}
\newcommand{\subalign}[1]{%
  \vcenter{%
    \Let@ \restore@math@cr \default@tag
    \baselineskip\fontdimen10 \scriptfont\tw@
    \advance\baselineskip\fontdimen12 \scriptfont\tw@
    \lineskip\thr@@\fontdimen8 \scriptfont\thr@@
    \lineskiplimit\lineskip
    \ialign{\hfil$\m@th\scriptstyle##$&$\m@th\scriptstyle{}##$\crcr
      #1\crcr
    }%
  }
}
\begin{document}

\begin{center}
{\textbf {\LARGE A constant-ratio approximation algorithm for a class of hub-and-spoke network design problems and metric labeling problems: star metric case}}\\
\vspace{5mm}
\begin{tabular}{ccc}
 {\large Yuko Kuroki and Tomomi Matsui}\\
 {\small Department of Industrial Engineering and Economics, Tokyo Institute of Technology}\\
  \texttt{kuroki.y.aa@m.titech.ac.jp, matsui.t.af@m.titech.ac.jp}
\end{tabular}
\end{center}



\pagestyle{plain}
\pagenumbering{roman}

\begin{abstract}
Transportation networks frequently employ hub-and-spoke network architectures to route flows between many origin and destination pairs.
In this paper,
we deal with a problem,
called the 
	{\em single allocation hub-and-spoke network design problem}.
In the single allocation hub-and-spoke network design problem,
the goal is to allocate each non-hub node to exactly one of given hub nodes so as to minimize the total transportation cost.    
The problem is essentially equivalent to another
combinatorial optimization problem, called the {\em metric labeling} problem. 
The metric labeling problem was first introduced by Kleinberg and Tardos~\cite{KLEINBERG2002} in 2002,
motivated by application to segmentation problems in computer vision and energy minimization problems in related areas.

In  this paper,
we deal with the case where the set of hubs forms a star, which is called the {\em star-star hub-and-spoke network design problem}, and the  {\em star-metric labeling problem}.
This model arises especially in telecommunication networks in the case where set-up costs of hub links 
	are considerably large
	or full interconnection is not required.
We propose a polynomial-time randomized approximation algorithm for these problems, whose approximation ratio is less than 5.281.
Our algorithms solve a linear relaxation problem and apply dependent rounding procedures. 


\end{abstract}

\setstretch{1.20}

\pagenumbering{arabic}
\setcounter{section}{0}
\section{Introduction}\label{se:intro}

Design of efficient networks is desired  in transportation systems, such as telecommunications, delivery services, and airline operations,
and is one of the extensively studied topics in operations research field.
 Transportation networks frequently employ {\em hub-and-spoke network} architectures to route flows between many origin and destination pairs.
A transportation network with many origins and destinations requires a huge cost,
and hub-and-spoke networks play an important role in reducing transportation costs and set-up costs.
Hub facilities work as switching points for flows in a large network.
Each non-hub node is allocated to exactly one of the hubs instead of assigning every origin-destination pair directly.
Using hub-and-spoke architecture,
we can construct large transportation networks with fewer links, which leads to smart operating systems (see Figure~\ref{fig:pointhub}).
\vspace{1.0cm}

\begin{figure}[htpp]
\begin{center}
			\includegraphics[width=9.5cm]{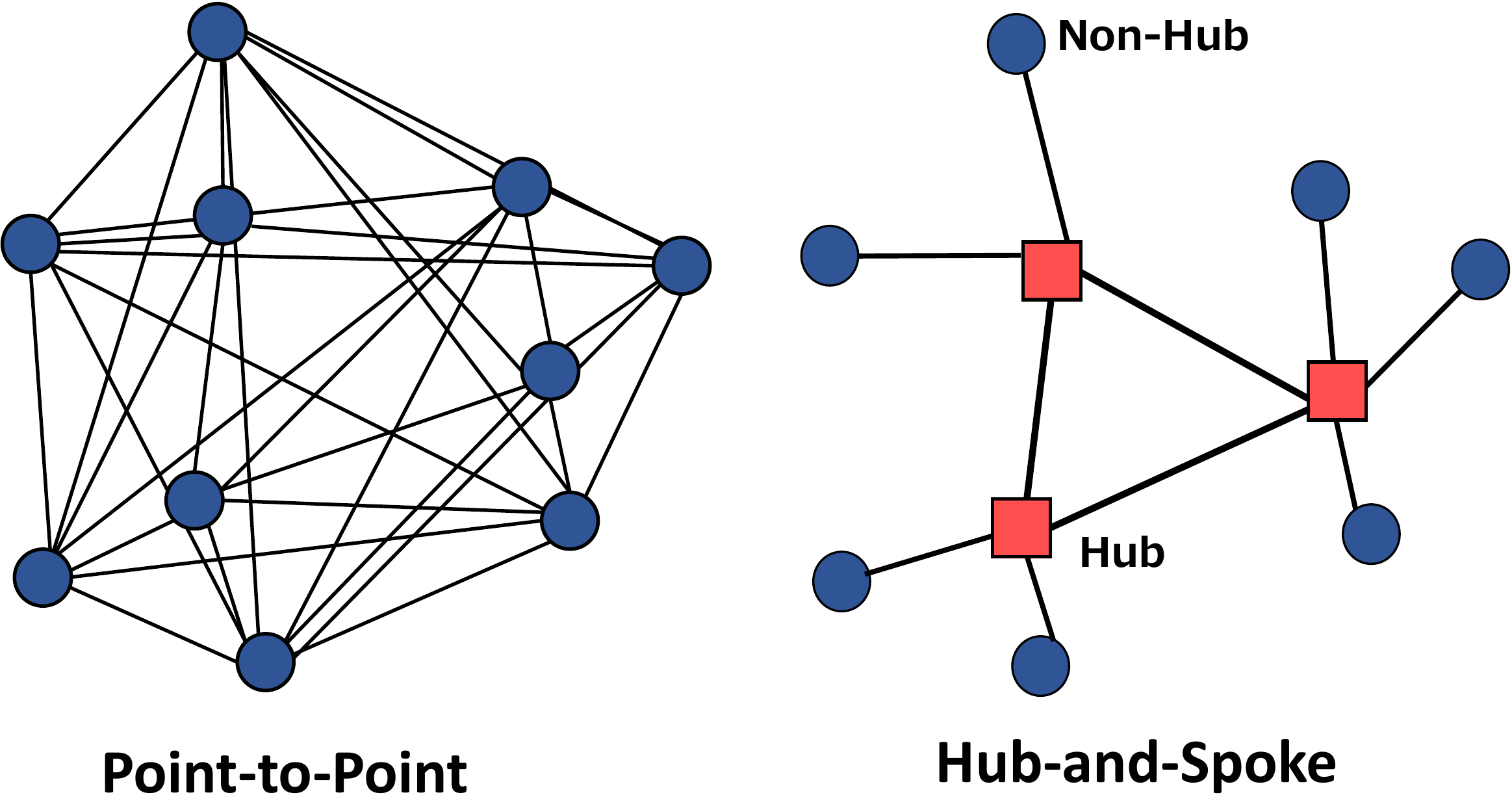}
			\caption{Point-to-Point vs. Hub-and-Spoke}
			\label{fig:pointhub}
\end{center}
\end{figure}



\subsection{Single Allocation Hub-and-Spoke Network Desing Problem}

In real transportation systems, the location of hub facilities is often fixed because of costs for moving equipment on hubs. In that case, the decision of allocating non-hubs to hubs is much important for an efficient transportation.
In this study, we discuss the situation where  
	the location of the hubs is given,
	and deal with a problem,
	called a 
	{\em single allocation hub-and-spoke network design problem}, which aims to minimize the total transportation cost.

Formally, the input consists of an $h$-set $H$ of hubs, an $n$-set $N$ of non-hubs,
 non-negative cost per unit flow $c(i,j)=c(j,i)$ for each pair $\{ i,j\} \in H^2$,
 and $c(p,i)$ for each ordered pair $(p,i) \in N \times H$.
 Additionally, we are given $w(p,q)$ which denotes a non-negative amount of flow from non-hub $p$ to another non-hub $q$.
 The task is to find an assignment $f: N \rightarrow H$, that maps non-hubs to hubs minimizing the total transportation cost $Q(f)$ defined below.
 The transportation cost corresponding to a flow from non-hub $p$ to non-hub $q$ is defined by $w_{pq}(c(p,f(p))+c(f(p),f(q))+c(f(q),q))$.
 Thus 
 \[
Q(f)=\sum_{(p,q) \in N^2} w(p,q)\left(c(p,f(p))+c(q,f(q))+c(f(p),f(q)) \right)
 ,\]
 and the goal is to find an assignment that minimizes the total transportation cost.

    When the number of hubs is equal to two,  there exist polynomial time exact algorithm~\cite{greig1989exact,SOHN1997}. 
    Sohn and Park~\cite{SOHN2000} proved NP-completeness 
	of the problem even if the number of hubs is equal to three.
In the case where the given matrix of costs between hubs is a {\em Monge matrix},
there  exists a polynomial-time exact algorithm~\cite{chekuri2004linear}.
Iwasa et al.\@~\cite{IWASA2009}
	proposed a simple deterministic $3$-approximation algorithm 
	and a randomized $2$-approximation algorithm
	under the assumptions that $c_{ij} \leq c_{pi} + c_{pj}$  $(\forall (i,j,p)\in H^2\times N)$ and $c_{ij} \leq c_{ik}+c_{kj}$  $(\forall (i,j,k)\in H^3)$.
They also proposed a $(5/4)$-approximation algorithm 
	for the special case where the number of hubs is three.
Ando and Matsui~\cite{ANDO2011} deal with the case in which
	all the nodes are embedded in a 2-dimensional plane 
	and the transportation cost of an edge per unit flow is proportional to the Euclidean distance 
	between the end nodes of the edge.
They proposed a randomized $(1+2/\pi)$-approximation algorithm. In the previous our work~\cite{kuroki2017}, we proposed $2(1+\frac{1}{h})$-approximation algorithm for the case where the set of hubs forms a cycle.




 \subsection{Metric Labeling Problem}
In 2002, Kleinberg and Tardos~\cite{KLEINBERG2002} introduced the {\em metric labeling} problem, motivated by applications to segmentation problems in computer vision and energy minimization problems in related areas. A variety of heuristics that use classical combinatorial optimization techniques have developed in these fields (\cite{boykov1998markov,boykov2001fast,kumar2014rounding,li2016complexity} for example).
 A single allocation hub-and-spoke network design problem
	includes a class of the metric labeling problem.
 The metric labeling problem captures a broad range of classification problems and has connections to Markov random field.
In such classification problems, the goal is to assign labels to some given set of objects minimizing the total cost of labeling.

Formally, the metric labeling problem takes as input an $n$-vertex undirected graph $G(V,E)$ with a nonnegative weight function $w$ on the edges, a set $L$ of labels with metric distance function $d: L \times L \rightarrow R $ associated with them,
and an assignment cost $c(v,a)$ for each vertex $v \in V$ and label $a \in L$. The output is an assignment for every object $v \in V$ to a label $a \in L$. 
Given a solution $f:V \rightarrow L$ to the metric labeling, the quality of labeling $Q(f)$ is based on the contribution of two sets of terms.

{\em Vertex labeling cost}:
For each object $v \in V$, this cost is denoted by $c(v,f(v))$. A vertex labeling cost $c(v,a)$ express an estimate of its likelihood of having each label $a \in L$. 
These likelihoods are observed from some heuristic preprocessing of the data.
For example, suppose the observed color of pixel (i.e., object) $v$ is white; then the cost $c(v,black)$ should be high while $c(v,white)$ should be low.

{\em Edge separation cost}:
For each edge $e=\{u,v\} \in E$, the cost is denoted by $w(\{u,v\}) \cdot d(f(u),f(v))$.
The weights of the edges express a prior estimate on relationships among objects; if $u$ and $v$ are deemed to be related, then we would like them to be assigned close or identical labels.
A distance $d(a,b)$ for $a,b \in L$ represents  how similar label $a$ and $b$ are.
For example, $d(white,black)$ would be large while $d(orange,yellow)$ would be small.
If we assign label $a$ to object $u$ and label $b$ to object $v$, then we pay $w(\{u,v\})d(a,b)$ as the edge separation cost.

\noindent Thus,
\[
Q(f)=\sum_{u \in V}c(u,f(u))+\sum_{\{u,v\}\in E}w(\{u,v\})d(f(u),f(v))
,\]
and the goal is to find a labeling $f: V \rightarrow L$ minimizing $Q(f)$.
Due to the simple structure and variety of applications,
the metric labeling has received much attention since its introduction by Kleinberg and Tardos~\cite{KLEINBERG2002}. 

In case the number of labels is two, the problem can be solved precisely in polynomial-time.
The first approximation algorithm for the metric labeling problem was shown by Kleinberg and Tardos~\cite{KLEINBERG2002}, and its approximation ratio is O$(\log k \log\log k)$, where $k$ denotes the number of labels.
This algorithm uses the probabilistic tree embedding tequnique~\cite{bartal1998approximating}.
Using the improved representation of metrics as combination of tree metrics by Fakcharoenphol, Rao, and Talwar~\cite{Fak2004}, its approximation ratio was improved to O$(\log k)$, which is the best general result to date.
Constant-ratio approximations are known for some special cases~\cite{archer2004approximate,chekuri2004linear,calinescu2005approximation,KLEINBERG2002}.


 \subsection{Contributions}\label{sub:contribution}


 We deal with the a single assignment hub-and-spoke network design problem where the given set of hubs forms a star,
 and corresponding problem is called the 
 {\em star-star hub-and-spoke network 
 design problems} and {\em star-metric labeling  problems}.
 In this case,
 each hub is only connected to a unique depot.
 When all the transportation cost per unit flow between the depot and each hub are same,
 this problem is equivalent to the {\em uniform labeling}
 problem  (all distances of labels are equal to 1) introduced in \cite{KLEINBERG2002} which is still NP-hard.
 For star-metric case,
 using the result of \cite{konjevod2001approximating} for planer graphs,
 there exists O$(1)$-approximation algorithm~\cite{KLEINBERG2002}. 
 The previous O$(1)$-approximation ratio is at least 6.
 We proposed an improved approximation algorithm for star-metric case, and the approximation ratio is  $\min \{\frac{r-1}{\log r}\left( 2+ \frac{r^2+1}{r^2-1}\right) | r>1 \}  ( \approx 5.2809 \ {\rm at}\ r \approx 1.91065)$.
Our results give an important class
	of the metric labeling problem and hub-and-spoke network design problems,
	which has a polynomial time approximation algorithm 
	with a constant approximation ratio. 
In case where set-up costs of hub links 
	are considerably large, incomplete networks can be used instead of full interconnection among hub facilities.
The star structures, that we discuss in this paper, frequently arise in  especially telecommunication networks~\cite{LABBE2008}.

\subsection{Related Work}\label{sec:related}

\noindent\textbf{Approximation Results for Metric Labeling Problems.}
 Gupta and Tardos~\cite{gupta2000constant} considered an important case of the metric labeling problem,
 in which the metric is the {\em truncated linear metric} where the distance between $i$ and $j$ is given by $d(i,j)= \min\{M,|i-j|\}$.
Chekuri et al.~\cite{chekuri2004linear} proposed $(2+\sqrt{2})$-approximation algorithm for the truncated linear metric, which is best known result.

In the case where the metric $d$ on a set of labels $L$ is a planar metric,
there exists O$(\log \mathrm{diam} \ G')$-approximation to the problem from the result~\cite{konjevod2001approximating} and \cite{KLEINBERG2002},
where $G'=(L,E,w)$ denote the weighted connected graph.
 Konjevod et al.~\cite{konjevod2001approximating} showed that
 for any positive integer $s$,
 the metric of $G$  without a $K_{s,s}$ minor can be probabilistically approximated by a special case of tree metric,
called {\em $r$-hierarchically well separated tree {\rm(}r-HST{\rm)}} with distortion O$(\log \mathrm{diam} \ G)$.
 Kleinberg and Tardos~\cite{KLEINBERG2002} gave a constant ratio approximation algorithm to the metric labeling for the case where the metric $d$ on a set of labels is the $r$-HST metric. Then O$(\log \mathrm{diam} \ G')$-approximation was guaranteed by combining these results for this case.

  \hspace{2cm}
\begin{table}[htb]
\begin{center}
\caption{Existing approximation algorithms for metric labeling problems}
\vspace{0.5cm}
\begin{tabular}{|c|c|}
\hline
Metric &App. Ratio\\
\hline
 general & O$(\log k)$~\cite{KLEINBERG2002,Fak2004} \\
\hline
 planar graph & O$(\log \mathrm{diam} \ G') $~\cite{konjevod2001approximating,KLEINBERG2002} \\
\hline
 truncked linear  & $2+\sqrt{2}$~\cite{chekuri2004linear}\\
\hline
 uniform & 2~\cite{KLEINBERG2002} \\
\hline
\end{tabular}
\end{center}
\end{table}

\noindent\textbf{Inapproximability Results.}
Chuzhoy and Naor~\cite{chuzhoy2007hardness} showed that there is no polynomial time approximation algorithm
	with a constant ratio for the metric labeling problem unless $\mbox{P}=\mbox{NP}$.
Moreover, they proved that the problem is $\Omega((\log |V|)^{1/2-\delta})$-hard to approximate for any constant $\delta satisfying 0<\delta<1/2$, unless NP$\subseteq$DTIME$(n^
{poly(\log n)})$ (i.e. unless NP has quasi-polynomial time algorithms).    

In 2011, Andrew et al.~\cite{andrews2011capacitated}
introduced {\em capacitated metric labeling}, in which there are additional restrictions that each label $i$ 
receives at most $l_i$ nodes.
They proposed a polynomial-time, O$(\log |V|)$-approximation algorithm when the number of labels is fixed and proved that it is impossible to approximate the value of an instance of capacitated metric labeling  to within any finite ratio, unless P = NP.\\
    

\noindent\textbf{Hub Location Problems.}
{\em Hub location problems} ({\em HLPs}) consist of locating hubs 
	and designing hub networks so as to minimize the sum of  set-up costs and transportation costs.
HLPs are formulated as a quadratic integer programming problem by O'Kelly~\cite{OKELLY1987} in 1987.
Since O'Kelly proposed HLPs,
hub location has been studied by researchers in different areas such as location science, geography, operations research,
regional science, network optimization,
transportation, telecommunications, and computer science.
Many researches on HLPs 
	have been done in various applications and there exists several reviews and surveys
(see~\cite{ALUMUR2008,CAMPBELL1994,%
CAMPBELL2012,CONTRERAS2015,%
KLINCEWICZ1998,OKELLY1994} for example).
In case where the location of the hubs is given,
the remaining subproblem is essentially equal to the {\em  single allocation hub-and-spoke network design problem} mentioned in previous subsections.
Fundamental HLPs assume a full interconnection between hubs.
Recently, several researches consider 
	incomplete hub networks which arise especially 
	in telecommunication systems
	(see~\cite{CAMPBELL2005a,CAMPBELL2005b,ALUMUR2008,CALIK2009}
	for example). 
These models are useful when set-up costs of hub links 
	are considerably large
	or full interconnection is not required.
	That motivated us to consider a single allocation hub-and-spoke network design problem
	where the given set of hubs forms a star (see Figure~\ref{FIG:topology}).
There are researches 
	which assume that hub networks constitute 
	a particular structure such as 
a line~\cite{MARTINS2015},
a cycle~\cite{cycle_contreras2017exact},
tree~\cite{tree_tavakkoli2014multi,tree_sedehzadeh2016,KIM1992,CONTRERAS2009,CONTRERAS2010,DESA2013},
a star~\cite{LABBE2008,YAMAN2008,YAMAN2012}.

\begin{figure}[htpp]
\begin{center}
			\includegraphics[width=10cm]{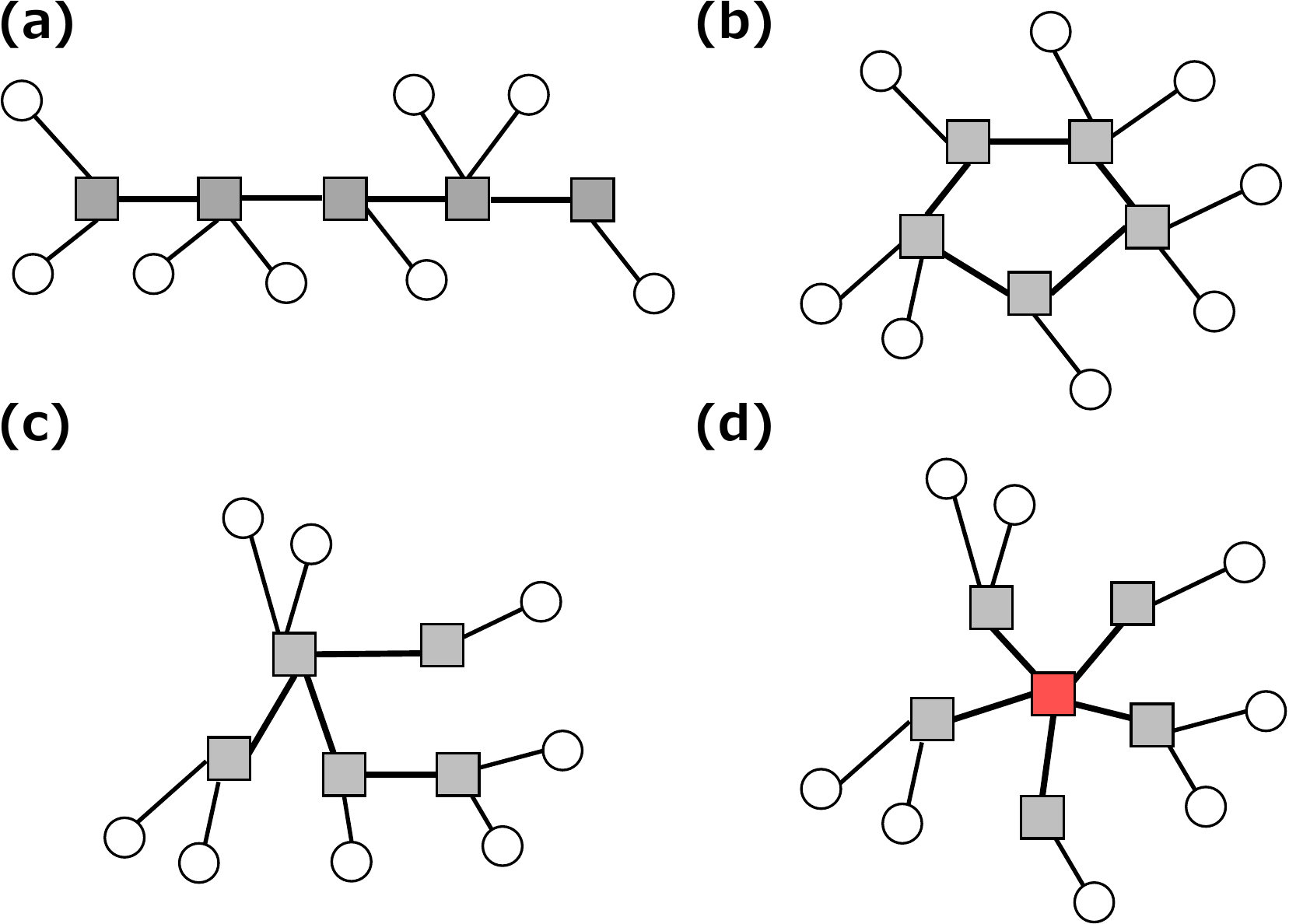}
			\caption{Structure of (a) line-star,
            (b) cycle-star,
            (c) tree-star,
            and (d) star-star}
			\label{FIG:topology}
\end{center}
\end{figure}

\subsection{Paper Organization}

This paper is structured as follows
In Section~\ref{sec:starprob},
we provide a problem formulation.
In Section~\ref{sec:algstar},
we describe an approximation algorithm.
In Section~\ref{sec:analysis}, we analyze the approximation ratio of our algorithm.



\section{Problem Formulation}\label{sec:starprob}
Let $H=\{1,2,\ldots, h\}$ be a $h \ (\geq 3)$-set of hub nodes and 
	let $N=\{p_1,p_2,\ldots,p_n \}$ be a $n$-set of \mbox{non-hub} nodes.
This paper deals with a single assignment hub network design problem
	which assigns each non-hub node to exactly one hub node.
We discuss the case in which the set of hubs forms a star, 
	and the corresponding problem is called 
	the {\em star-star hub-and-spoke network design problem} and/or {\em star-metric labeling problem}.
More precisely, 
	we are given a unique depot, denoted by $0$, which lies at the center of hubs.
    Each hub $i \in H$ connects to the depot and doesn't connect to other hubs.
    Let $\ell_i$ be the transportation cost per unit flow between the depot and a hub $i$.
    In our setting, we assume that $0 \leq \ell_1 \leq \ell_2 \leq \cdots \leq \ell_h$ and $ \ell_i \in  \mathbb{Z}$  for all $i \in H$.
    Then for each pair of hub nodes $(i,j) \in H$,
    $c_{ij}$ denotes the transportation cost per unit flow between hub $i$ and hub $j$ and it satisfies that $c_{ij}=\ell_i+\ell_j$.
    We assume $c_{ii}=0$ for all $i \in H$.
For each ordered pair $(p ,i) \in N \times H$, 
	$c_{pi}$ denotes a non-negative cost per unit flow 
	on an undirected edge $\{p,i\}$.
We denote a given non-negative amount of flow from a non-hub $p$ 
	to another non-hub $q$ by $w_{pq}\ (\geq 0)$.
Throughout this paper, we assume that $w_{pp}=0 \; (\forall p \in N)$.
We discuss the problem for finding an assignment of non-hubs to hubs 
	which minimizes the total transportation cost defined below.
    
\begin{figure}[htpp]
\begin{center}
			\includegraphics[width=13cm]{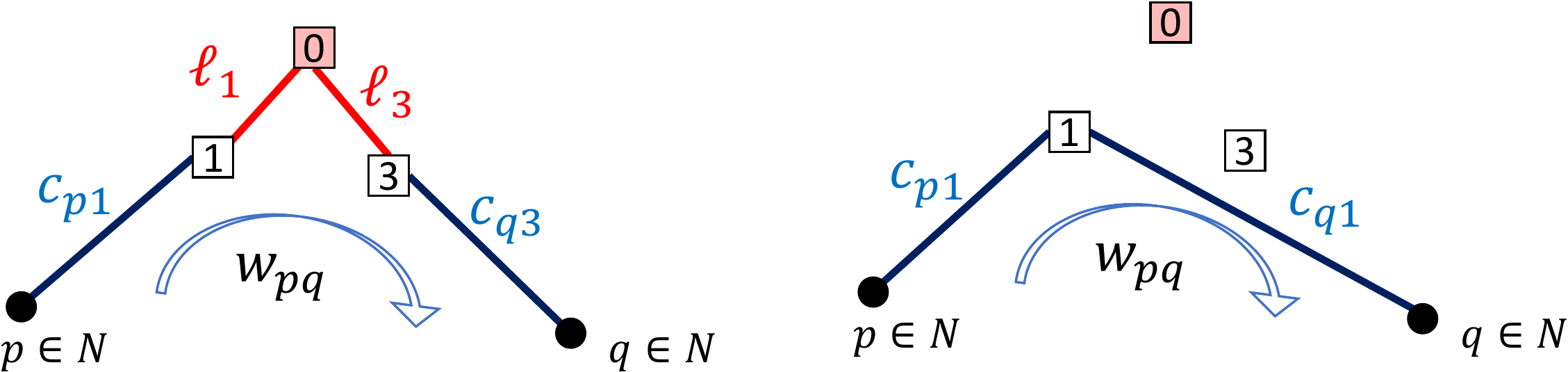}
			\caption{Examples of transportation}
			\label{fig:example}
\end{center}
\end{figure}

When non-hub $p$ and non-hub $q$ $(p \neq q)$
	are assigned to hub $i$ and hub $j$, respectively,
	an amount of flow $w_{pq}$ is sent along a path
	$((p,i), (i,0), (0,j), (j,q))$.
In the rest of this paper,
	 a matrix $C=(c_{ij})$ defined above 
	is called the {\em cost matrix}
	and/or the {\em star-metric matrix}.
The transportation cost corresponding to a flow 
	from the origin $p \in N$ to destination $q \in N$ is defined by
	$w_{pq}(c_{pi}+c_{ij}+c_{qj})$.
    
In case where $h=3$ and $\ell_1=\ell_2=\ell_3=1$,
the corresponding problem is equivalent to the problem 
where a $3$-set of hubs forms a complete graph and $C$ satisfies that $c_{12}=c_{23}=c_{31}=2$.
Thus the star-star hub network design problem
	is NP-hard~\cite{SOHN2000}.\\

Now we formulate our problem as 0-1 integer programming.
First, we introduce a $0$-$1$ variable $x_{pi}$ 
	for each pair  $\{p,i\} \in N \times H$ as follows:
\[	x_{pi}=	\left\{
				\begin{array}{ll}
					1 & (\mbox{$p \in N$ is assigned to $i \in H$}), \\
					0 & (\mbox{otherwise}). 
				\end{array}
			\right.
\]
\noindent
Since each non-hub is connected to exactly one hub,
	we have a constraint $\sum_{i \in H}x_{pi}=1$ for each $p \in N$.
Then, the star-star hub network design problem (star-metric labeling problem) can be formulated as follows:
\begin{alignat*}{6}
&\mbox{\rm SHP: \quad} && \mbox{\rm min.\quad }
&& 
		\sum_{(p,q) \in N^2,\ p \not=q} w_{pq}
		\left( 
			  \sum_{i\in H} c_{pi}x_{pi}   
			\right.&&\left.+ \sum_{j\in H} c_{qj}x_{qj} 
		+
			\sum_{k \in H}\ell_{k}|x_{pk}-x_{qk}| 
		\right)  
	\\
&&& \mbox{\rm s.~t.}		
&& \sum_{i\in H} x_{pi} = 1  && (\forall p \in N), \\ 
&&&&&  x_{pi} \in \{0,1\}		&&(\forall \{p,i\} \in N \times H).\\
\end{alignat*}
%


 Next we describe a linear relaxation problem.
 By substituting  non-negativity constraints of the variables $x_{pi} \ (\forall  \{p,i\}  \in N \times H)$ for $0$-$1$ constraints in SHP and replace $|x_{pk}-x_{qk}|$ with $Z_{pqk}$, we obtain the following a linear relaxation problem  denoted by LRP.
\begin{alignat*}{6}
&\mbox{\rm LRP: \quad} && \mbox{\rm min.\quad }
&& 
		\sum_{(p,q) \in N^2,\ p \not=q} w_{pq}
		\left( 
			  \sum_{i\in H} c_{pi}x_{pi}   
			\right.&&\left.+ \sum_{j\in H} c_{qj}x_{qj} 
		+
			\sum_{k \in H}\ell_{k}Z_{pqk}
		\right)  
	\\
&&& \mbox{\rm s.~t.}		
&& \sum_{i\in H} x_{pi} = 1  && (\forall p \in N), \\ 
&&&&&  0\leq x_{pi}	&&(\forall  \{p,i\} \in N \times H), \\
&&&&& -Z_{pqk}\leq x_{pk}-x_{qk} \leq Z_{pqk} 	&& \ (\forall (p,q) \in N^2, \forall k \in H).
\end{alignat*}
We can solve LRP in polynomial time by employing 
	an interior point algorithm.

\section{Algorithm}\label{sec:algstar}
We now design an approximation algorithm. 
The approach is proceeded as follows:
\begin{description}
	\item[Step~1.] Choose $\lambda \in [0,1)$ uniformly at random and classify the hubs under $\kappa_{\max} +1$ classes according to {\bf Definition~\ref{definitionA}}.
    \item[Step~2.] Solve the linear relaxation problem LRP and obtain an optimal solution $\bix^*$.
	\item[Step~3.] Find a partition of non-hubs by {\bf Algorithm~\ref{alg:class}}.
	\item[Step~4.] Assign each non-hub to a hub by {\bf Algorithm~\ref{alg:assign}}.
\end{description}

Now, we describe our algorithm precisely.
 In Step 1, we classify the set of hubs according to the distance between each hub and the depot (see Figure~\ref{fig:hubclass}). We assign each hub to a class.
This classification is based on the following definition.

\begin{definition}\label{definitionA}
For any $\lambda \in {[0,1)}$, we say that hub $i$ belongs to class $\kappa$
	if and only if $\ell_i \ (\geq 1)$ satisfies the inequality $r^{\max \{(\kappa-2)+\lambda, 0 \}} \leq \ell_i < r^{(\kappa-1)+\lambda} $ and hub $i$ belongs to class $0$ if and only if $\ell_i=0$, where $\kappa$ is a non-negative integer.
\end{definition}

\begin{figure}[htpp]
\begin{center}
			\includegraphics[width=12cm]{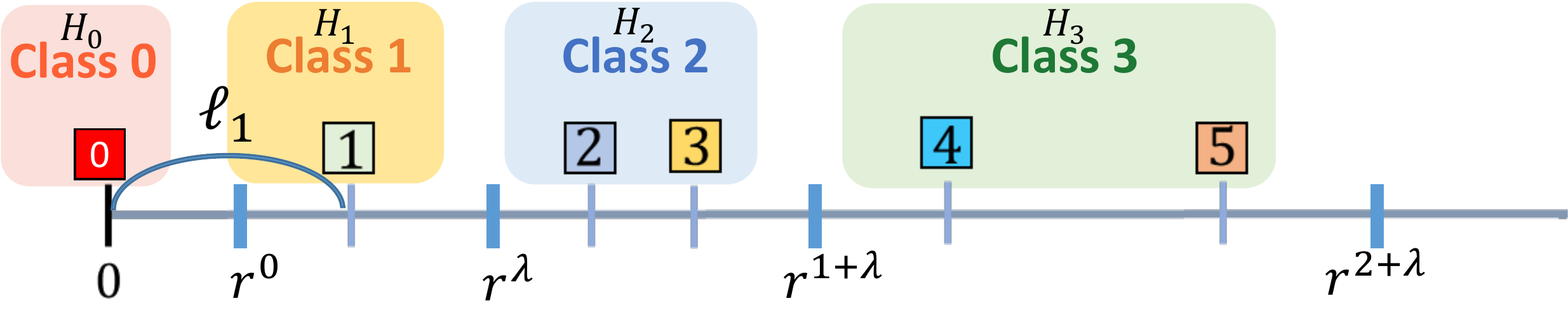}
			\caption{Classification of Hubs}
			\label{fig:hubclass}
\end{center}
\end{figure}
Before we describe the details of later steps,
we introduce some notations.
Let $\alpha(\lambda,i)$ be the class of hub $i \in H$.
We denote a subset of integers $\{0,1,2,...,\kappa_{{\rm max}} \}$ by $ [\kappa_{\rm max}]$ where $\kappa_{\rm max}= \max_{i \in H}\alpha(\lambda,i)$. 
Let $H_{\kappa}$ be a subset of hubs that belongs to class $k \in [\kappa_{\max}]$ i.e. $H_\kappa=\{ i \in H \ | \ \alpha(\lambda,i)=\kappa \}$.
Let $\beta(p)$ be the class that non-hub $p \in N$ belongs to.
We denote a subset of non-hubs that belong to class $\kappa \in [\kappa_{\max}]$ by $N_\kappa$ i.e. $N_\kappa=\{ p \in N \ | \ \beta(p)=\kappa \}$.

In Step 2, we solve the linear relaxation problem  LRP formulated in the previous subsection. We use an optimal solution $ \bm{x}^*$ in \textbf{Algorithm~1} and \textbf{Algorithm~2}.

In Step 3,
we assign each non-hubs to a class of hubs defined by {\bf Definition~3}.
For example, if non-hub $p_1$ belongs to the subset $N_3$ obtained by \textbf{Algorithm~1},
$p_1$ will be assigned to one of hubs in $H_3$ defined by \textbf{Definition~3}.
Given an optimal solution of LRP and a total order of the hubs,
\textbf{Algorithm~1} outputs a partition of non-hubs, $N_0, N_1, \cdots, N_{\kappa_{\max}}$.

Here, a total order of hubs depends on labels of classes.
For example,  the total order of hubs in Figure~\ref{fig:roundingclass} is $(5, 4, 1, 3, 2)$.
The order of class labels $\pi'$  is $(\kappa_{\max}, ,\ldots,4, 2, 0, 1, 3, \ldots, \kappa_{\max}-1)$ when $\kappa_{\max} $ is an even number, and
$(\kappa_{\max}-1, \ldots,4, 2, 0, 1, 3, \ldots, \kappa_{\max})$  when $\kappa_{\max} $ is an odd number.
The order of class labels in Figure~\ref{fig:roundingclass} is $(2, 0, 1, 3)$ for example.
For each non-hub $p$, we place $x_{pi}$ for $i \in H_{\kappa}$ in the order of class labels. Then, the total order of hubs $\pi$ is defined as $\left ({\rm any \ order \ of \ hubs\ in \ } H_{\pi'(1)}, {\rm any \ order \ of \ hubs\ in \ } H_{\pi'(2)},\ldots,{\rm any \ order \ of \ hubs\ in \ }H_{\pi'(\kappa_{\max+1})} \right) $ in our rounding scheme,
where $\pi'(i)$ denotes the $i$-th element of $\pi'$.

\begin{algorithm}[t]
	\caption{Classify each non-hub into a class}\label{alg:class}
	\begin{algorithmic}[1]
		\REQUIRE{An optimal solution $\bix^*$ of LRP and a total order $\pi$ of the hubs.}
		\ENSURE{ A partition of non-hubs $N_0,N_1,\cdots,N_{\kappa_{\max}}$.}
        \STATE{Set $N_i=\emptyset \ (\forall i \in [0,1,\ldots,\kappa_{\max}]$)}
		\STATE{Generate a random variable $U$ which follows a uniform distribution defined on $[0,1)$.}
		\FOR{$p \in N$}
		\STATE{Insert non-hub $p$ into a subset $N_{\alpha(\lambda,\pi(i))}$, where $i\in \{1,2,\ldots, |H| \}$
        is the minimum number that satisfies $U < x_{p\pi(1)}+x_{p\pi(2)}+\cdots+x_{p\pi(i)}$.}
		\ENDFOR
		\STATE{\textbf{return} $N_0,N_1,\cdots,N_{\kappa_{\max}}$.} 
	\end{algorithmic}
\end{algorithm}

\begin{figure}[htpp]
\begin{center}
			\includegraphics[width=7cm]{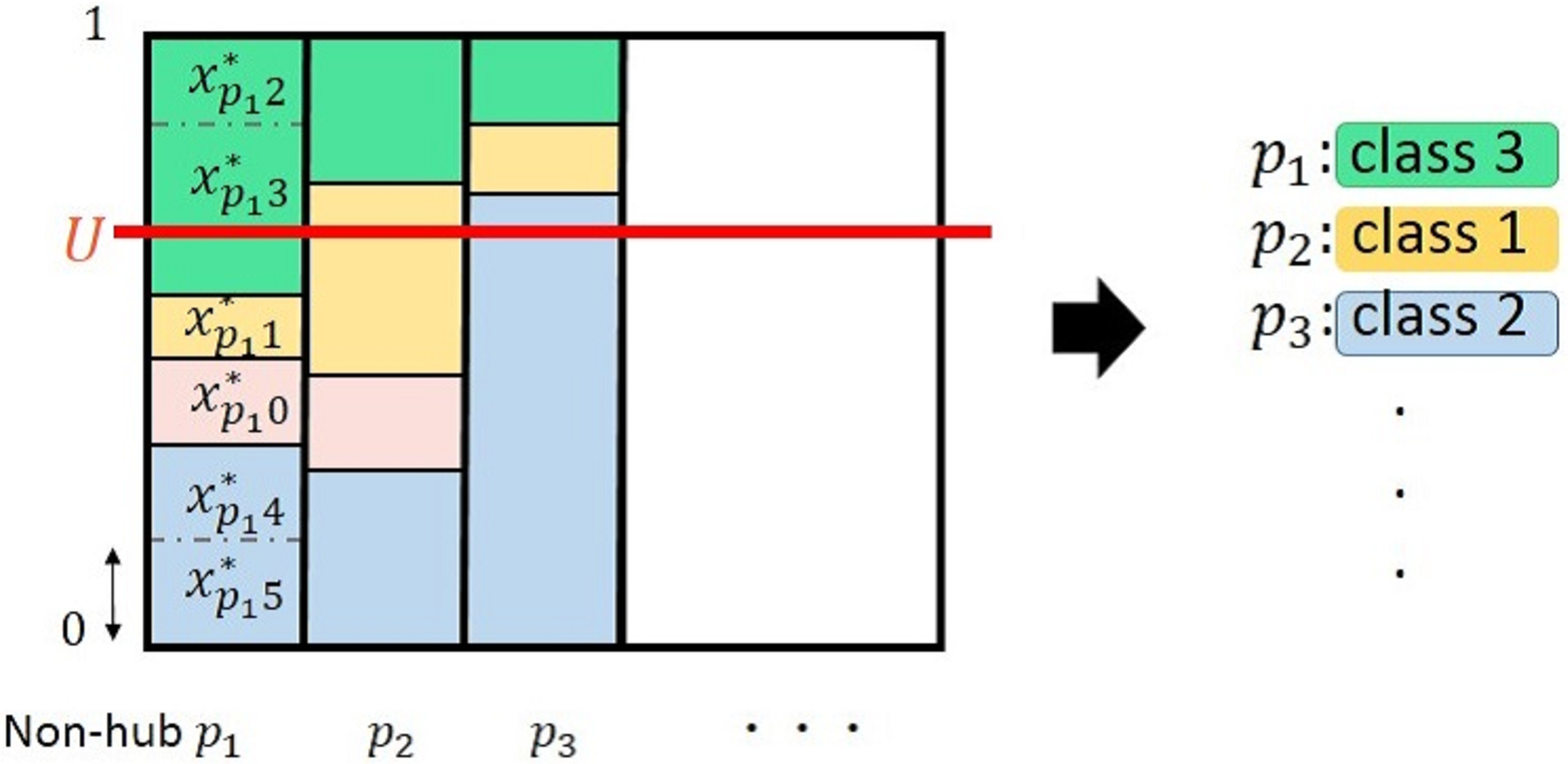}
			\caption{Dependent rounding procedure to classify each non-hub into a class}
			\label{fig:roundingclass}
\end{center}
\end{figure}

In Step 4, we decide an assignment from non-hubs to hubs using rounding technique.
In \textbf{Algorithm~2}, we perform a rounding procedure for each subset of non-hubs.
For a subset $N_{\kappa} \subseteq N$,
we first choose hub $i \in H_{\kappa}$ and $U \in[0,1)$ uniformly at random.
Then, if $U \leq x_{pi}^*$, we assign non-hub $p$ to hub $i$ (see Figure~\ref{fig:assign}).
Until all the non-hubs are assigned to one of hubs, we continue this procedure.
Note that in each phase we can set the upper bound of $U$ to the  maximum value of $x_{pi}$ of remained non-hubs.

\begin{algorithm}[t]
	\caption{Assign each non-hub to a hub}\label{alg:assign}
	\begin{algorithmic}[1]
		\REQUIRE{An optimal solution $\bix^*$ of LRP and $\kappa_{\max}+1$ subsets of non-hubs $N_0,\ldots,N_{\kappa_{\max}}$.}
		\ENSURE{An assignment from non-hubs to hubs $\boldmath{X}$.} 
		\FOR{$\kappa=0,1,\ldots,\kappa_{\max}$}
        	\STATE{Initialize $S \leftarrow N_{\kappa}$}
        	\WHILE{$|S| > 0$}
			\STATE{Choose hub $i$ $\in H_{\kappa}$ uniformly at random.}
			\STATE{Choose $U \in [0,1)$ uniformly at random.}
            \FOR{$p \in S$}
        	\STATE{\textbf{if} $U \leq x_{pi}^*$ \textbf{then} $X_{pi}=1,X_{pj}=0 \ (\forall j \in H_{\kappa} \setminus \{i\})$}
			\STATE{$S \leftarrow S \setminus \{p\}$.}
            \ENDFOR
			\ENDWHILE
        \ENDFOR
		\STATE{ \textbf{return} $\boldmath{X}$.} 
	\end{algorithmic}
\end{algorithm}

\begin{figure}[htpp]
\begin{center}
			\includegraphics[width=8cm]{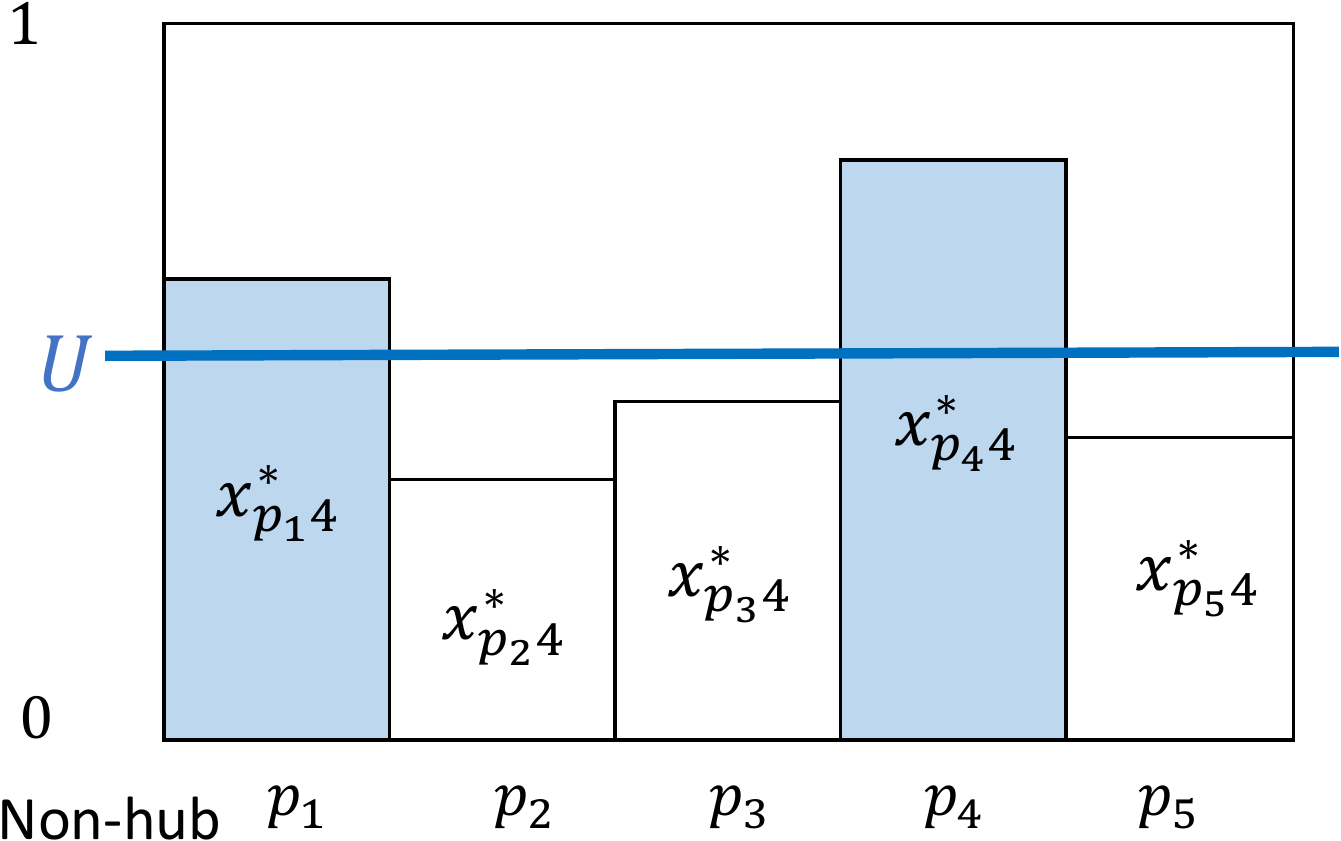}
			\caption{Non-hub $p_1$ and $p_4$ are assigned to hub $4$ by this phase, where $H_{\kappa}=\{4,5\}$ and $N_{\kappa}=\{ p_1,p_2,p_3,p_4,p_5\}$}
			\label{fig:assign}
\end{center}
\end{figure}

\section{Analysis of Approximation Ratio}\label{sec:analysis}
In this subsection, we show that our algorithm obtains a
 $\min \{\frac{r-1}{\log r}\left( 2+ \frac{r^2+1}{r^2-1}\right) | r>1 \} $ $\approx 5.2809$-approximate solution for any instance.\\

\noindent\textbf{Notation.} We introduce some notations that we use throughout this subsection.
Let $\alpha(\lambda,i)$ be the class of hub $i \in H$.
For any $i \in H$, let define $u(\lambda,i)=r^{(\alpha(\lambda,i)-1)+\lambda}$ if $\ell_i \geq 1$, $u(\lambda,i)=0$ if $\ell_i =0$, where $r$ is a real number satisfying $r>1$, i.e.,
\begin{eqnarray}
u(\lambda,i)=\left\{ \begin{array}{ll}
r^{\alpha(\lambda,i)+\lambda-1}& (\ell_i \geq 1), \\
0 & ({\rm \ell_i=0}) .\notag 
\end{array} \right.
\end{eqnarray}
Let define a cost $\hat{c}_{ij}$ for each pair $\{i,j\} \in H^2$ as follows:
\begin{eqnarray}
\hat{c}_{ij}=\left\{ \begin{array}{ll}
|u(\lambda,i)-u(\lambda,j)|& (\alpha(\lambda,i)=\alpha(\lambda,j) \ (\hspace{-0.4cm}\mod 2) ),\\
u(\lambda,i)+u(\lambda,j) & ({\rm otherwise}). \notag
\end{array} \right.
\end{eqnarray}

\noindent\textbf{Remark.} A metric defined by  $\hat{C}=\hat{c}_{ij}$ becomes a line metric (see Figure~\ref{fig:linemetric}).
Thus the matrix $\hat{C}$ is a Monge matrix.

\begin{figure}[htpp]
\begin{center}
			\includegraphics[width=7cm]{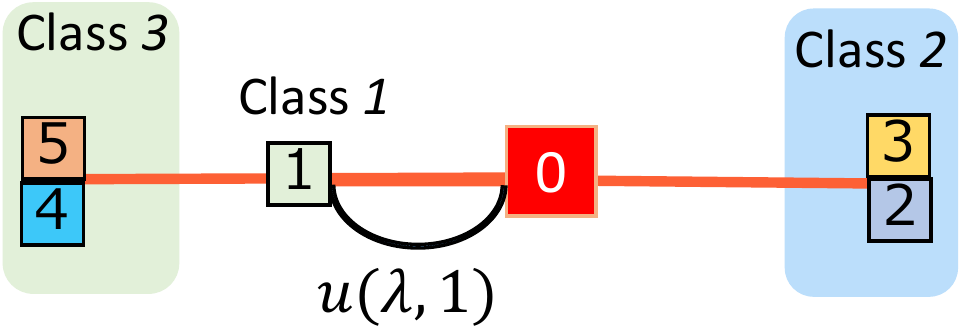}
			\caption{A metric defined by $\hat{C}$}
			\label{fig:linemetric}
\end{center}
\end{figure}

Now we start with the following lemma.
\begin{lemma}
\label{lemma:prob}
Let $\bix^*$ be an optimal solution of LRP.
A vector of random variables $\boldmath{X}$ obtained by the proposed algorithm satisfies that 
${\rm Pr}[\boldmath{X}_{pi}=1]=x_{pi}^* \ (\forall p \in N, \forall i \in H)$.
\end{lemma}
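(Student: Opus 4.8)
The plan is to prove the marginal-probability identity by decomposing the randomized construction into its two independent rounding stages and tracking the probability that non-hub $p$ survives into the correct class and then gets assigned to hub $i$ within that class. First I would fix $p \in N$ and $i \in H$, and let $\kappa = \alpha(\lambda, i)$ be the class of hub $i$. The key observation is that $\boldsymbol{X}_{pi} = 1$ happens only if (a) Algorithm~1 places $p$ into the subset $N_\kappa$, and (b) Algorithm~2, restricted to $N_\kappa$ and the hubs of $H_\kappa$, assigns $p$ to hub $i$. Since the random variable $U$ drawn in Algorithm~1 and the random choices made in Algorithm~2 are independent, I would write $\Pr[\boldsymbol{X}_{pi}=1] = \Pr[p \in N_\kappa] \cdot \Pr[p \text{ assigned to } i \mid p \in N_\kappa]$.

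For the first factor, I would use the definition of Algorithm~1: $p$ is inserted into $N_{\alpha(\lambda,\pi(j))}$ for the minimal $j$ with $U < \sum_{t\le j} x^*_{p\pi(t)}$. Because $\sum_{i\in H} x^*_{pi} = 1$ (the LRP constraint) and $U$ is uniform on $[0,1)$, the event ``$p$ lands on a hub in position $j$'' has probability exactly $x^*_{p\pi(j)}$, and summing over the positions $j$ with $\pi(j) \in H_\kappa$ gives $\Pr[p \in N_\kappa] = \sum_{j : \pi(j)\in H_\kappa} x^*_{p\pi(j)} = \sum_{j\in H_\kappa} x^*_{pj}$. Here the fact that the hubs of each class occupy a contiguous block in the order $\pi$ (by construction of $\pi$ from $\pi'$) is convenient but not strictly needed for the marginal; it matters for later lemmas about separation costs.

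For the second factor, I would analyze the \texttt{while}-loop of Algorithm~2 on $S = N_\kappa$. This is the standard ``randomly pick a label, randomly pick a threshold, assign everyone below threshold'' rounding: in each iteration a hub $i' \in H_\kappa$ is chosen uniformly and $U\in[0,1)$ uniformly, and every still-unassigned $p$ with $U \le x^*_{pi'}$ is assigned to $i'$. Conditioned on $p \in N_\kappa$, I claim $\Pr[p \text{ assigned to } i] = x^*_{pi} / \big(\sum_{j\in H_\kappa} x^*_{pj}\big)$; multiplying by the first factor then yields $x^*_{pi}$ as desired. To prove this claim I would argue that, conditioned on the iteration in which $p$ first gets assigned, the label chosen in that iteration is $i$ with probability proportional to $x^*_{pi}$ — more precisely, in any single iteration the probability that $p$ is assigned in that iteration to a specific hub $i'$ is $\frac{1}{|H_\kappa|} x^*_{pi'}$ (uniform hub choice times the chance $U \le x^*_{pi'}$), so the conditional distribution of the target hub, given that $p$ is assigned in that iteration, is $x^*_{pi'}/\sum_j x^*_{pj}$, and this is the same for every iteration; hence it is also the distribution of $p$'s final assignment. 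One should also check termination, i.e. that every $p\in N_\kappa$ is eventually assigned with probability one — this follows since $\sum_{j\in H_\kappa} x^*_{pj} > 0$ whenever $p$ was placed in $N_\kappa$ (indeed $p\in N_\kappa$ forces some $x^*_{p\pi(j)}>0$ with $\pi(j)\in H_\kappa$), so each iteration assigns $p$ with positive probability bounded below.

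The main obstacle I expect is the second factor: making the ``same conditional distribution in every iteration'' argument fully rigorous, since the set $S$ shrinks over iterations and one must be careful that conditioning on ``$p$ not yet assigned'' does not skew the distribution of which hub $p$ eventually goes to. The clean way is to condition on the entire sequence of (hub, threshold) pairs, identify the first index at which $p$ would be caught, and observe that the hub at that index has the required proportional distribution by symmetry of the i.i.d. draws; the randomness used to assign other non-hubs is irrelevant to $p$'s fate. Everything else — the use of $\sum_i x^*_{pi}=1$, the uniformity of $U$, independence of the two stages — is routine.
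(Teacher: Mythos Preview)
Your proposal is correct and follows essentially the same decomposition as the paper: write $\Pr[X_{pi}=1]$ as the product of $\Pr[p\in N_\kappa]=\sum_{j\in H_\kappa}x^*_{pj}$ from Algorithm~1 and the conditional assignment probability $x^*_{pi}/\sum_{j\in H_\kappa}x^*_{pj}$ from Algorithm~2, then cancel. Your write-up is considerably more careful than the paper's two-line computation (you justify the conditional distribution across iterations and note termination), but the argument is the same.
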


\begin{proof}
\begin{align*}
&{\rm Pr}[\boldmath{X}_{pi}=1]=\Pr[p \in N {\rm \ is\ classified\ into} \ N_{\alpha(\lambda,i)} ]\Pr[p \in N {\rm \ is \ assigned\  to\ } i (\in H)]\\
&=\left(\sum_{j:\alpha(\lambda,j)=\alpha(\lambda,i)}x_{pj}^*\right)\frac{x_{pi}^*/|H^{\alpha(\lambda,i)}|}{\sum_{j:\alpha(\lambda,j)=\alpha(\lambda,i)}x_{pj}^*/|H^{\alpha(\lambda,i)}|}=x_{pi}^*.
\end{align*}
\end{proof}

\begin{lemma}
\label{lemma:k}

For any pair of hubs $\{i,j\} \in H_{\kappa} \times H_{\kappa'}$, 
any real number $r>1$, and any real number $\lambda \in [0,1)$,
we have the inequality $u(\lambda,i)+u(\lambda,j) \leq \frac{r^2+1}{r^2-1} \hat{c}_{ij}$,
where $\kappa,\kappa' \in [\kappa_{\max}]$ and $\kappa \not= \kappa'$. 
\end{lemma}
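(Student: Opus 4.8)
The plan is to prove the inequality by a short case analysis on the parities of the two class indices $\kappa,\kappa'$, after which everything reduces to an elementary one-variable estimate. First I would observe that both sides of the claimed inequality are symmetric in $i$ and $j$, so we may assume $\kappa=\alpha(\lambda,i)<\alpha(\lambda,j)=\kappa'$. Since $r>1$ gives $\frac{r^2+1}{r^2-1}>1$, the inequality is immediate whenever $\hat c_{ij}=u(\lambda,i)+u(\lambda,j)$. By the definition of $\hat C$ this already disposes of the case $\alpha(\lambda,i)\not\equiv\alpha(\lambda,j)\pmod 2$, and it also disposes of the degenerate case $\ell_i=0$: then $\kappa=0$, hence $u(\lambda,i)=0$ (and $\ell_j\ge 1$ because $\kappa'\neq 0$), so even in the "same parity" branch one has $\hat c_{ij}=|u(\lambda,i)-u(\lambda,j)|=u(\lambda,j)=u(\lambda,i)+u(\lambda,j)$.

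It then remains to treat $\alpha(\lambda,i)\equiv\alpha(\lambda,j)\pmod 2$ with $\ell_i,\ell_j\ge 1$. Here $\kappa,\kappa'\ge 1$ and, crucially, $\kappa'-\kappa$ is a \emph{positive even} integer, so $\kappa'-\kappa\ge 2$. In this branch $u(\lambda,i)=r^{\kappa+\lambda-1}$, $u(\lambda,j)=r^{\kappa'+\lambda-1}$, and $\hat c_{ij}=u(\lambda,j)-u(\lambda,i)=r^{\kappa+\lambda-1}(r^{\kappa'-\kappa}-1)$, while $u(\lambda,i)+u(\lambda,j)=r^{\kappa+\lambda-1}(1+r^{\kappa'-\kappa})$. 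Dividing the desired inequality by the positive factor $r^{\kappa+\lambda-1}$ and setting $t:=r^{\kappa'-\kappa}$, it becomes $1+t\le \frac{r^2+1}{r^2-1}(t-1)$, equivalently $(1+t)(r^2-1)\le(r^2+1)(t-1)$, which rearranges to $2(t-r^2)\ge 0$. Since $\kappa'-\kappa\ge 2$ and $r>1$ yield $t=r^{\kappa'-\kappa}\ge r^2$, this holds, completing the proof.

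I do not expect a genuine obstacle here: the argument is a case split followed by routine algebra. The only point that needs care is the parity bookkeeping — one must notice that "same parity and distinct classes" forces the gap $\kappa'-\kappa$ to be at least $2$, which is exactly what makes $t\ge r^2$ and hence makes $\frac{r^2+1}{r^2-1}$ the correct constant — together with separating out the corner case $\ell_i=0$ so that the formula $u(\lambda,i)=r^{\kappa+\lambda-1}$ is only used when it is valid.
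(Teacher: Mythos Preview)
Your proof is correct and follows essentially the same approach as the paper: a parity case split on $\kappa-\kappa'$, with the odd case immediate from $\hat c_{ij}=u(\lambda,i)+u(\lambda,j)$ and the even case reducing to the key observation that distinct same-parity classes differ by at least $2$, so $\max\{u(\lambda,i),u(\lambda,j)\}\ge r^2\min\{u(\lambda,i),u(\lambda,j)\}$ (your $t\ge r^2$). Your explicit separation of the corner case $\ell_i=0$ is slightly more careful than the paper's writeup, but not a different idea.
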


\begin{proof}

(Case i)
$\kappa -\kappa' \equiv 0 \ (\hspace{-0.2cm}\mod 2)$

\noindent In this case, it is obvious that
\begin{align*}
&u(\lambda,i)+u(\lambda,j)\\
&= \frac{u(\lambda,i)+u(\lambda,j)}{\max \{u(\lambda,i),u(\lambda,j)\}-\min \{u(\lambda,i),u(\lambda,j)\}}\max \{u(\lambda,i),u(\lambda,j)\}-\min \{u(\lambda,i),u(\lambda,j)\} \\
&=\frac{\max \{u(\lambda,i),u(\lambda,j)\}+\min \{u(\lambda,i),u(\lambda,j)\}}{\max \{u(\lambda,i),u(\lambda,j)\}-\min \{u(\lambda,i),u(\lambda,j)\}}\hat{c}_{ij} \\
&=\frac{r^2 \max \{u(\lambda,i),u(\lambda,j)\}+r^2 \min \{u(\lambda,i),u(\lambda,j)\}}{r^2 \max \{u(\lambda,i),u(\lambda,j)\}- r^2 \min \{u(\lambda,i),u(\lambda,j)\}}\hat{c}_{ij}.\\	
\end{align*}
   Recall that $k\not= \kappa'$ and $k-\kappa'\equiv 0 \ (\hspace{-0.2cm} \mod 2)$, and thus it holds that $r^2 {\rm min}\{ u(\lambda,i),u(\lambda,j)\} \leq {\rm max}\{u(\lambda,i),u(\lambda,j)\}$ for any pair of hubs $\{i,j\} \in H_{\kappa} \times H_{\kappa'}$. 
  Then we get
\begin{align*}
&\frac{r^2 \max \{u(\lambda,i),u(\lambda,j)\}+r^2 \min \{u(\lambda,i),u(\lambda,j)\}}{r^2 \max \{u(\lambda,i),u(\lambda,j)\}- r^2 \min \{u(\lambda,i),u(\lambda,j)\}}\hat{c}_{ij}  \\
&\leq
    \frac{r^2 \max \{u(\lambda,i),u(\lambda,j)\}+\max \{u(\lambda,i),u(\lambda,j)\}}{r^2 \max \{u(\lambda,i),u(\lambda,j)\}- \max \{u(\lambda,i),u(\lambda,j)\}}\hat{c}_{ij} 
       =\frac{r^2+1}{r^2-1}\hat{c}_{ij}.
\end{align*}

\noindent(Case ii) $\kappa -\kappa' \equiv 1 \ (\hspace{-0.2cm}\mod 2)$ \\
\noindent From the definition,  we have that

\[
u(\lambda,i)+u(\lambda,j)=\hat{c}_{ij} \leq \frac{r^2+1}{r^2-1}\hat{c}_{ij}.
\]

\end{proof}

\begin{lemma}
\label{lemma:same}
Let $\boldmath{X}$ be a vector of random variables obtained by the proposed algorithm and let  $\bix^*$ be an optimal solution of LRP.
For any pair of non-hubs $(p,q) \in N^2$ 
and any real number $\lambda \in [0,1)$, 
we have the following inequality 
\[
{\rm E}\left[\sum_{\kappa \in [\kappa_{\max}]}\sum_{(i,j)\in H_{\kappa}^2:i\not=j}(\ell_i+\ell_j)\boldmath{X_{pi}}\boldmath{X_{qj}} \right] \leq 2\sum_{i \in H }u(\lambda,i)|x_{pi}^*-x_{qi}^*|.
\]

\end{lemma}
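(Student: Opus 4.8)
The plan is to condition on the random partition of non-hubs and analyze the assignment step within each class. First I would note that by Lemma~\ref{lemma:prob} we have ${\rm Pr}[\boldmath{X}_{pi}=1]=x_{pi}^*$, but the left-hand side involves the \emph{joint} event $\{\boldmath{X}_{pi}=1,\ \boldmath{X}_{qj}=1\}$ with $i\neq j$ in the same class $H_\kappa$, so I need to understand the correlation structure induced by Algorithm~\ref{alg:assign}. The key observation is that $\boldmath{X}_{pi}\boldmath{X}_{qj}=1$ for $i\neq j$ in $H_\kappa$ requires both $p$ and $q$ to land in $N_\kappa$ \emph{and} then, within the while-loop of Algorithm~\ref{alg:assign}, $p$ and $q$ to be ``separated'' — assigned in different phases, or in the same phase but to different hubs. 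For the hub-choice-plus-threshold rounding used here, I would isolate the probability that $p$ goes to $i$ while $q$ goes to $j\neq j$; a standard argument for this kind of independent-round rounding on a single coordinate gives that, conditioned on both being in $N_\kappa$, the event is controlled by $\min\{x^*_{pi},\ldots\}$-type quantities, and summing over ordered pairs $(i,j)$ with $i\neq j$ yields something bounded by $\sum_{i\in H_\kappa}|x^*_{pi}-x^*_{qi}|$ times the diameter-like quantity $\max_{i\in H_\kappa}(\ell_i)$ within the class.

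Concretely, I would first reduce to a single class: since the events $\{p\in N_\kappa\}$ are determined by the dependent rounding of Algorithm~\ref{alg:class} and the sum over $\kappa$ is a partition, it suffices to bound, for each $\kappa$,
\[
{\rm E}\!\left[\sum_{(i,j)\in H_\kappa^2:\,i\neq j}(\ell_i+\ell_j)\boldmath{X}_{pi}\boldmath{X}_{qj}\right].
\]
Within $H_\kappa$ every hub has $\ell_i < r^{(\kappa-1)+\lambda}=u(\lambda,i')$ for the representative, so $\ell_i+\ell_j \le 2\max_{k\in H_\kappa} u(\lambda,k)$, and in fact for $i\in H_\kappa$ one has $u(\lambda,i)$ a \emph{common} value $u_\kappa := r^{(\kappa-1)+\lambda}$ (the upper class endpoint), so $\ell_i+\ell_j \le 2u_\kappa$ with $u_\kappa = u(\lambda,i) = u(\lambda,j)$ for all $i,j\in H_\kappa$. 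Thus it is enough to show
\[
{\rm E}\!\left[\sum_{(i,j)\in H_\kappa^2:\,i\neq j}\boldmath{X}_{pi}\boldmath{X}_{qj}\right]\;\le\;\sum_{i\in H_\kappa}|x^*_{pi}-x^*_{qi}|,
\]
because multiplying by $2u_\kappa$ and summing over $\kappa$ gives exactly $2\sum_{i\in H}u(\lambda,i)|x^*_{pi}-x^*_{qi}|$ (using that the terms with $\ell_i=0$, i.e. class $0$, contribute zero on both sides).

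The remaining inequality is the heart of the matter and is essentially the classical fact that the rounding scheme ``choose hub uniformly at random, keep if $U\le x^*_{pi}$, repeat'' assigns $p$ and $q$ to different hubs with probability at most the $\ell_1$-distance $\tfrac12\sum_i|x^*_{pi}-x^*_{qi}|$ — or twice that, depending on the exact normalization, which is why the factor $2$ appears. I would prove it by analyzing one pass of the while-loop: conditioned on $p$ and $q$ both still in $S$, the chosen hub $i$ is uniform over $H_\kappa$ and $U$ uniform over $[0,1)$; the pair is ``co-assigned'' to $i$ with probability proportional to $\min\{x^*_{pi},x^*_{qi}\}$, ``split at this step'' with probability proportional to $|x^*_{pi}-x^*_{qi}|$, and ``both survive'' otherwise; renormalizing over the geometric number of passes, the probability of ever being split is $\sum_i|x^*_{pi}-x^*_{qi}|\big/\big(\sum_i x^*_{pi}+\sum_i x^*_{qi} - \sum_i|x^*_{pi}-x^*_{qi}|\big)$ times a bounded factor — I would need to check the denominator is at least, say, $1$ (it equals $2\sum_i\min\{x^*_{pi},x^*_{qi}\}$ after the conditioning on both being in $N_\kappa$, but the conditioning on $N_\kappa$ rescales things, and that interplay is the subtle point). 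The main obstacle I anticipate is exactly bookkeeping this conditioning correctly: Algorithm~\ref{alg:class} correlates the event $\{p\in N_\kappa\}$ with $\{q\in N_\kappa\}$ and with the values $x^*_{p\pi(i)}$ in a way that is convenient (it preserves marginals, by Lemma~\ref{lemma:prob}) but needs care to combine with the independent second-stage rounding, and getting the constant to be exactly $2$ rather than something larger will require using that within a class the dependent rounding of Algorithm~\ref{alg:class} can only \emph{decrease} the split probability relative to independent rounding. I would lean on Lemma~\ref{lemma:prob}'s marginal identity together with a union bound over the ``split'' events to get the clean factor-$2$ bound.
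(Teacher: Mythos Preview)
Your high-level plan matches the paper's: reduce to a single class $\kappa$, use $\ell_i+\ell_j\le 2u_\kappa$ with $u_\kappa=r^{(\kappa-1)+\lambda}$, and then bound
\[
\sum_{(i,j)\in H_\kappa^2:\,i\neq j}\Pr[X_{pi}=X_{qj}=1]\ \le\ \sum_{i\in H_\kappa}|x^*_{pi}-x^*_{qi}|.
\]
But your computation of the separation probability is off, and this is exactly the step that gives the clean factor~$2$. In a single pass of Algorithm~\ref{alg:assign} (conditioned on both $p,q$ still in $S$), the three mutually exclusive outcomes have per-pass probabilities $\tfrac{1}{|H_\kappa|}\sum_i\min\{x^*_{pi},x^*_{qi}\}$ (both assigned to the same hub), $\tfrac{1}{|H_\kappa|}\sum_i|x^*_{pi}-x^*_{qi}|$ (exactly one assigned: ``separated''), and the remainder (both survive). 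Renormalizing over the first pass in which at least one is assigned, the probability of ever being separated is
\[
\frac{\sum_{i\in H_\kappa}|x^*_{pi}-x^*_{qi}|}{\sum_{i\in H_\kappa}\max\{x^*_{pi},x^*_{qi}\}},
\]
not $\sum_i|x^*_{pi}-x^*_{qi}|\big/\big(2\sum_i\min\{x^*_{pi},x^*_{qi}\}\big)$ as you wrote (your denominator is the ``co-assigned'' mass, which can be zero).

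The second point you are missing is how to discharge the conditioning on $\{p,q\in N_\kappa\}$. No correlation analysis of Algorithm~\ref{alg:class} is needed; the paper simply uses the crude bound
\[
\Pr[\beta(p)=\beta(q)=\kappa]\ \le\ \Pr[\beta(p)=\kappa]\ =\ \sum_{i\in H_\kappa}x^*_{pi}\ \le\ \sum_{i\in H_\kappa}\max\{x^*_{pi},x^*_{qi}\},
\]
and this upper bound cancels exactly with the denominator $\sum_{i\in H_\kappa}\max\{x^*_{pi},x^*_{qi}\}$ in the separation probability. That cancellation is what produces $\sum_{i\in H_\kappa}|x^*_{pi}-x^*_{qi}|$ with no extra constant; multiplying by $2u_\kappa$ and summing over $\kappa$ finishes the lemma. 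Your proposed route via ``dependent rounding can only decrease the split probability'' is neither needed nor obviously true here.
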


\begin{proof}
First, for any integer $\kappa \in [\kappa_{\max}]$, we show that 
\[
{\rm E}\left[\sum_{(i,j)\in H_{\kappa}^2:i\not=j}(\ell_i+\ell_j)\boldmath{X_{pi}}\boldmath{X_{qj}} \right] \leq 2\sum_{i \in H_{\kappa} }u(\lambda,i)|x_{pi}^*-x_{qi}^*|. \tag{4.1}
\]
(Case i) $\kappa=0$

\noindent We can see that ${\rm E}\left[\sum_{(i,j)\in H_{0}^2:i\not=j}(\ell_i+\ell_j)\boldmath{X_{pi}}\boldmath{X_{qj}} \right] =0 \ (\because \forall (i,j)\in H_0^2, \ell_i=\ell_j=0)$ and $\sum_{i \in H_{\kappa}}u(\lambda,i)|x_{pi}^*-x_{qi}^*|=0 \ (\because \forall i \in H^0, u(\lambda,i)=0)$. Then we obtain the inequality (4.1) for this case.\\

\noindent (Case ii) $\kappa \in \{1,2,\ldots,\kappa_{\max} \}$

\noindent In this case, it is easy to see that
\begin{align}
&{\rm E}\left[\sum_{(i,j)\in H_{\kappa}^2:i\not=j}(\ell_i+\ell_j)\boldmath{X_{pi}}\boldmath{X_{qj}} \right]  =\sum_{(i,j)\in H_{\kappa}^2:i\not=j}\left((\ell_i+\ell_j){\rm Pr}[X_{pi}=X_{qj}=1] \right) 
\notag\\
& \leq \sum_{(i,j)\in H_{\kappa}^2:i\not=j} 2r^{\kappa+\lambda-1}{\rm Pr}[X_{pi}=X_{qj}=1] \ (\because \forall i \in H_{\kappa},  \ell_i \leq 2r^{\kappa+\lambda-1}) \notag\\
&=2r^{\kappa+\lambda-1}\sum_{(i,j)\in H_{\kappa}^2:i\not=j}\Pr[X_{pi}=X_{qj}=1]. \tag{4.2}
\end{align}

\noindent We say that non-hub $p$ and non-hub $q$ are {\bf separated} by a single phase in \textbf{Algorithm~\ref{alg:assign}}
if both $p$ and $q$ are unassigned before the phase and exactly one of $p$ and $q$ is assigned in this phase (See Figure~\ref{fig:somephase}).
Note that even if $p$ and $q$ are separated by some phase,
they may be assigned to a mutual hub later.

\begin{figure}[htpp]
\begin{center}
			\includegraphics[width=7cm]{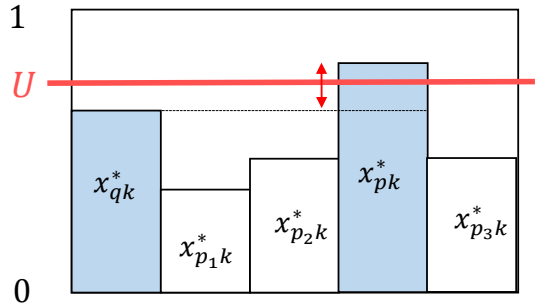}
			\caption{Non-hub $p$ and non-hub $q$ are separated in this phase.}
			\label{fig:somephase}
\end{center}
\end{figure}

The probability $\sum_{(i,j)\in H_{\kappa}^2:i\not=j}{\rm Pr}[X_{pi}=X_{qj}=1]$ in the right-hand side of inequality $(4.2)$ is the probability that both non-hub $p$ and $q$ are classified into $N_{\kappa}$ by \textbf{Algorithm~\ref{alg:class}} and non-hub $p$ and $q$ are assigned to different hubs by \textbf{Algorithm~\ref{alg:assign}}.
This probability can be bounded by the probability that both non-hub $p$ and $q$ are classified into $N_{\kappa}$ by \textbf{Algorithm~\ref{alg:class}} and non-hub $p$ or $q$ are separated by some phase in  \textbf{Algorithm~\ref{alg:assign}}.
Then  for any $\kappa \in \{1,2,\ldots,\kappa_{\max}\}$,
we have that
\begin{align}
&\sum_{(i,j) \in H_{\kappa},i \neq j} \Pr [X_{pi}=X_{qj}=1 ] \notag
=\sum_{(i,j)\in H_{\kappa}^2:i\not=j} {\rm Pr}[\beta(p)=\beta(q)=\kappa]{\rm Pr}[X_{pi}=X_{qj}=1 \bigm| \beta(p)=\beta(q)=\kappa] \\\notag
& \leq
\Pr[\beta(p)=\beta(q)=\kappa] \sum_{k \in H_{\kappa}} \frac {|x_{pk}^*-x_{qk}^*|/|H_{\kappa}|}{{\rm max}\{x_{pk}^*,x_{qk}^*\}/|H_{\kappa}|}. \notag
\end{align}

\noindent Thus, we obtain that
\begin{align}
&2r^{\kappa+\lambda-1}\sum_{(i,j)\in H_{\kappa}^2:i\not=j}\Pr[X_{pi}=X_{qj}=1] \notag\\ 
&\leq
2r^{\kappa+\lambda-1} {\rm Pr}[\beta(p)=\beta(q)=\kappa]\frac{\sum_{i \in H_{\kappa}}|x_{pi}^*-x_{qi}^*|/|H_{\kappa}|}{\sum_{i \in H_{\kappa}}{\rm max}\{x_{pi}^*,x_{qi}^*\}/|H_{\kappa}|} \notag \\
	&\leq 2r^{\kappa+\lambda-1}{\rm Pr}[\beta(p)=k]\frac{\sum_{i \in H_{\kappa}}|x_{pi}^*-x_{qi}^*|}{\sum_{i \in H_{\kappa}}x_{pi}^*} \notag\\
    &=2r^{\kappa+\lambda-1}\sum_{i \in H_{\kappa}}|x_{pi}^*-x_{qi}^*| \ (\because {\rm Pr}[\beta(p)=\kappa]= \sum_{i \in H_{\kappa}}x_{pi}^*)\notag \\
    &=2 \sum_{i \in H_{\kappa}}r^{\kappa+\lambda-1} |x_{pi}^*-x_{qi}^*| =2\sum_{i \in H_{\kappa}}u(\lambda,i)|x_{pi}^*-x_{qi}^*|. \notag
\end{align}
Then we have inequality (4.1) for this case.
From inequality (4.1), we have the desired result: 
\[
{\rm E}\left[\sum_{\kappa \in [\kappa_{\max}]}\sum_{(i,j)\in H_{\kappa}^2:i\not=j}(\ell_i+\ell_j)\boldmath{X_{pi}}\boldmath{X_{qj}} \right] \leq 2\sum_{i \in H }u(\lambda,i)|x_{pi}^*-x_{qi}^*|.
\]
\end{proof}

Next, to show Lemma~\ref{lemma:different}, we first describe Lemma~\ref{lemma:expect=NW} and Theorem~\ref{thm:monge}.
Lemma~\ref{lemma:expect=NW} implies that the probability that non-hub $p$ is classified into $N_{\kappa}$ and non-hub $q$ is classified into $N_{\kappa'}$ by \textbf{Algorithm~\ref{alg:class}} is bounded by $\sum_{i \in H_{\kappa}}\sum_{j \in H_{\kappa'}}y_{piqj}^{NW}$ where $\biy^{NW}$ is a {\em north-west corner rule} solution of the subproblem that is equivalent to a {\em Hitchcock transportation problem (HTP)}. The detail is omitted here (see Appendix).

\begin{lemma}
\label{lemma:expect=NW}
Let $\bm{X}$ be a vector of random variables obtained by the proposed algorithm, let $(\bix,\biy)$ be a feasible solution of LRP and let $\biy^{NW}$ be a solution of HTP {\rm(}defined in Appendix{\rm)}.
obtained by north-west corner rule. For any pair of $\{\kappa,\kappa'\} \in [\kappa_{\max}],\kappa \not = \kappa'$ and any pair of $(p,q)\in N^2
\ (p \not =q)$, we have the following inequality:
\[
\sum_{i \in H_{\kappa}}\sum_{j \in H_{\kappa'}}{\rm E }[X_{pi}X_{qj}]
\leq \sum_{i \in H_{\kappa}}\sum_{j \in H_{\kappa'}}y_{piqj}^{NW}.
\]
\end{lemma}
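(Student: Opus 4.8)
The plan is to analyze \textbf{Algorithm~\ref{alg:class}} for a fixed pair of classes $\kappa \neq \kappa'$ and a fixed pair of non-hubs $(p,q)$, and to show that the joint assignment probabilities it produces are dominated, entry by entry after summing over hubs in each class, by the north-west corner rule solution of an associated Hitchcock transportation problem. First I would recall from Lemma~\ref{lemma:prob} that the marginal probabilities satisfy $\Pr[X_{pi}=1]=x_{pi}^*$, so in particular $\Pr[\beta(p)=\kappa]=\sum_{i\in H_\kappa}x_{pi}^*$ and $\Pr[\beta(q)=\kappa']=\sum_{j\in H_{\kappa'}}x_{qj}^*$. The key observation is that in \textbf{Algorithm~\ref{alg:class}} a single uniform variable $U\in[0,1)$ is used to place \emph{both} $p$ and $q$ into classes: $p$ falls in class $\alpha(\lambda,\pi(i))$ where the prefix sums of $x_{p\pi(\cdot)}^*$ first exceed $U$, and similarly for $q$ with the prefix sums of $x_{q\pi(\cdot)}^*$. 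Because the hubs are ordered by $\pi$ so that all hubs of a given class label are contiguous (the class-label order $\pi'$ groups them), the event ``$\beta(p)=\kappa$'' is precisely the event that $U$ lies in a subinterval of $[0,1)$ of length $\sum_{i\in H_\kappa}x_{pi}^*$, whose position is determined by $\pi'$; likewise ``$\beta(q)=\kappa'$'' corresponds to a subinterval of length $\sum_{j\in H_{\kappa'}}x_{qj}^*$.

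Next I would set up the Hitchcock transportation problem referenced in the Appendix: its supplies are the numbers $\{x_{pi}^*\}_{i\in H_\kappa}$ (or more precisely the full vector over all hubs, arranged in the order $\pi$), its demands are $\{x_{qj}^*\}_{j\in H_{\kappa'}}$, and the north-west corner rule produces a feasible transportation plan $y^{NW}_{piqj}$ by greedily matching prefix masses. The crucial structural fact, which I expect is proved in the Appendix, is that $\sum_{i\in H_\kappa}\sum_{j\in H_{\kappa'}}y^{NW}_{piqj}$ equals the overlap length $\bigl|\,I_p(\kappa)\cap I_q(\kappa')\,\bigr|$ of the two prefix-interval blocks $I_p(\kappa)$ and $I_q(\kappa')$ that the common ordering $\pi$ assigns to $p$ and $q$ respectively. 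Meanwhile $\sum_{i\in H_\kappa}\sum_{j\in H_{\kappa'}}\mathrm{E}[X_{pi}X_{qj}] = \Pr[\beta(p)=\kappa\ \wedge\ \beta(q)=\kappa']\cdot\Pr[\text{assigned to }H_\kappa,H_{\kappa'}\mid\cdots]$, and since conditioning on being in the right classes the assignment by \textbf{Algorithm~\ref{alg:assign}} stays within $H_\kappa$ and $H_{\kappa'}$, this sum is at most $\Pr[\beta(p)=\kappa\ \wedge\ \beta(q)=\kappa']$, which is exactly $\bigl|I_p(\kappa)\cap I_q(\kappa')\bigr|$ because both $p$ and $q$ use the \emph{same} $U$ and the \emph{same} ordering $\pi$. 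Equating these two quantities gives the inequality; in fact it holds with equality if the within-class assignment never merges back, and with $\leq$ in general because $\sum_{i,j}\mathrm{E}[X_{pi}X_{qj}]$ over $i\in H_\kappa, j\in H_{\kappa'}$ only counts the ``still separated after Algorithm~2'' mass.

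I would then write the chain: $\sum_{i\in H_\kappa}\sum_{j\in H_{\kappa'}}\mathrm{E}[X_{pi}X_{qj}] \le \Pr[\beta(p)=\kappa,\ \beta(q)=\kappa'] = \bigl|I_p(\kappa)\cap I_q(\kappa')\bigr| = \sum_{i\in H_\kappa}\sum_{j\in H_{\kappa'}}y^{NW}_{piqj}$, invoking the Appendix lemma for the last equality. The main obstacle I anticipate is making the middle identification rigorous: one must be careful that the \emph{same} realization of $\lambda$ (hence the same class structure and the same ordering $\pi$, which depends on $\kappa_{\max}$ and the parity-based order $\pi'$) underlies both the left side and the HTP, and that the north-west corner construction is genuinely aligned with the prefix-sum sweep of $U$ in \textbf{Algorithm~\ref{alg:class}} — i.e., that the HTP's coordinate order matches $\pi$. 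Once that alignment is pinned down, the proof is essentially the observation that a common uniform variable fed through two monotone step-CDFs realizes exactly the north-west corner coupling of the two discrete distributions, a standard fact about comonotone couplings; the class-wise aggregation then follows by summing the corresponding blocks.
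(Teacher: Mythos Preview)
Your argument is correct and actually establishes the lemma with equality, not just inequality. It also differs from the paper's own proof. The paper argues crudely: it bounds the left-hand side by $\min\bigl\{\sum_{i\in H_\kappa}x^*_{pi},\,\sum_{j\in H_{\kappa'}}x^*_{qj}\bigr\}$ (first passing through the product of the two marginals, a step that is not justified since the shared $U$ makes $\beta(p)$ and $\beta(q)$ dependent, though the bound by the minimum holds regardless), and then asserts that the north-west corner block sum $\sum_{i\in H_\kappa}\sum_{j\in H_{\kappa'}}y^{NW}_{piqj}$ equals this same minimum. Your route is more structural: you recognise that threading a single uniform $U$ through the prefix sums of $x^*_{p,\pi(\cdot)}$ and of $x^*_{q,\pi(\cdot)}$ realises precisely the comonotone coupling, which coincides entrywise with the north-west corner rule applied to the HTP whose rows and columns are ordered by $\pi$. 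Hence $\Pr[\beta(p)=\kappa,\,\beta(q)=\kappa']=\sum_{i\in H_\kappa}\sum_{j\in H_{\kappa'}}y^{NW}_{piqj}$ on the nose, and since $\kappa\neq\kappa'$ forces $p$ and $q$ to land in disjoint hub sets under Algorithm~\ref{alg:assign}, your first ``$\leq$'' is in fact an equality too. This explains \emph{why} the order $\pi$ matters (it is what makes the HTP's north-west corner solution match the algorithm's joint law), whereas the paper's marginal bound does not exploit the shared $U$ at all and would hold even if $p$ and $q$ were classified independently. The alignment issue you flag---that the HTP's row and column order must be $\pi$---is the only genuine point of care, and once that is fixed your proof is complete.
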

\noindent The proof is omitted here (see Appendix).

Next we describe well-known relation between a north-west corner rule solution of a Hitchcock transportation problem and the Monge property.

\begin{theorem}\label{thm:monge}
If a given cost matrix $C=(c_{ij})$ is a Monge matrix, then the north-west corner rule solution $\biy^{NW}$ gives an optimal solution of all the Hitchcock transportation problems.
\end{theorem}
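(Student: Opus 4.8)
The plan is to prove the statement by induction on $m+n$, where $m$ and $n$ are the numbers of sources and sinks of the Hitchcock transportation problem, using a standard exchange argument driven by the Monge inequality. First I would record the combinatorial description of the north-west corner rule: writing $a_1,\ldots,a_m$ for the supplies and $b_1,\ldots,b_n$ for the demands, the rule sets $y_{11}^{NW}=\min\{a_1,b_1\}$, exhausts source $1$ and/or sink $1$, and then recurses on the residual problem obtained by deleting the exhausted source and/or sink and decreasing the surviving one (if any) by $\min\{a_1,b_1\}$. Feasibility of $\biy^{NW}$ is immediate from this description, so the entire content of the theorem is optimality.

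The key step is the claim that there exists an optimal solution $\biy$ with $y_{11}=\min\{a_1,b_1\}$. To prove it, take any optimal $\biy$ and assume without loss of generality $a_1\le b_1$ (the opposite case is symmetric). If $y_{11}<a_1$, then source $1$ ships a positive amount $y_{1j}>0$ to some sink $j\ge 2$; moreover, since sink $1$ must receive $b_1\ge a_1>y_{11}$ units, some source $i\ge 2$ ships $y_{i1}>0$ to sink $1$. Put $\delta=\min\{y_{1j},y_{i1}\}>0$, increase $y_{11}$ and $y_{ij}$ by $\delta$, and decrease $y_{1j}$ and $y_{i1}$ by $\delta$. This keeps all supply and demand constraints satisfied, preserves nonnegativity, and changes the objective by $\delta\,(c_{11}+c_{ij}-c_{1j}-c_{i1})$, which is at most $0$ by the Monge inequality applied with $1<i$ and $1<j$. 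Hence the modified solution is again optimal and has a strictly larger value of $y_{11}$; since the support of $\biy$ is finite, iterating this move finitely many times produces an optimal solution with $y_{11}=\min\{a_1,b_1\}$.

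Finally I would close the induction. Fix an optimal $\biy$ with $y_{11}=\min\{a_1,b_1\}$. Deleting source $1$ (if $a_1\le b_1$) or sink $1$ (if $a_1\ge b_1$) and reducing the surviving one by $\min\{a_1,b_1\}$ yields a smaller feasible transportation problem whose cost matrix is the corresponding submatrix of $C$; this submatrix is again Monge, because its Monge inequalities form a subset of those of $C$. The restriction of $\biy$ to the remaining cells is optimal for the residual problem, since any strictly cheaper residual solution would, combined with the fixed allocation $y_{11}=\min\{a_1,b_1\}$, yield a feasible solution of the original problem beating $\biy$. By the induction hypothesis the north-west corner rule solution of the residual problem is optimal, and appending the allocation $y_{11}=\min\{a_1,b_1\}$ reconstructs $\biy^{NW}$ of the original problem; hence $\biy^{NW}$ is optimal. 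The base case $m=1$ or $n=1$ is trivial because the feasible region is then a single point. I do not expect a genuine obstacle, as this is a classical fact about Monge matrices; the only places needing slight care are the tie case $a_1=b_1$, where source $1$ and sink $1$ are exhausted simultaneously and $m+n$ drops by two, and the remark that optimality of the residual restriction of $\biy$ follows from optimality of $\biy$ itself.
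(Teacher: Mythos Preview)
Your argument is the classical exchange-plus-induction proof and is correct. The only place where your wording is slightly loose is the termination of the exchange loop: ``since the support of $\biy$ is finite'' is not by itself a reason that iterating a strictly increasing move on $y_{11}$ terminates. The clean justification is that each exchange zeroes out at least one entry among $\{y_{1j}:j\ge 2\}\cup\{y_{i1}:i\ge 2\}$ (because $\delta=\min\{y_{1j},y_{i1}\}$) while creating no new positive entry in that set, so the number of such positive entries strictly decreases; when it hits zero you are forced to have $y_{11}=\min\{a_1,b_1\}$.

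As for comparison with the paper: there is nothing to compare. The paper does not prove this theorem at all; it simply states it and defers to the literature (Bein et al.\ and the survey of Burkard et al.). Your write-up therefore supplies a self-contained proof where the paper gives none, which is fine and arguably an improvement for a reader who does not want to chase references.
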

\noindent Proof is omitted here (see for example \cite{BEIN1995,BURKARD1996}).
\begin{algorithm}[t]
	\caption{Construct $\biy^*$ from $\bix^*$}\label{alg:x=>y}
    \begin{algorithmic}[1]
   		\REQUIRE{An optimal solution $\bix^*$ of LRP.}
		\ENSURE{ Vectors $\biy^*$} 
        \FOR{$(p,q)\in N^2, p\not=q$}
        	\STATE{Initialize $y_{piqj}^*=0\ (\forall (i,j)\in H^2$)}
            \STATE{Set $y_{piqj}^*$ to $\min \{x_{pi}^*,x_{qi}^*\}  \ (\forall i \in H)$}
            
            \FOR{$i=1,2,3,\ldots,h$}
            	\STATE{$j \leftarrow 1$}
        		\WHILE{$\sum_{k \in H}y_{piqk}^*<x_{pi}^*$}
            	\STATE{Set $y_{piqj}^*$ to $ \min \{x_{qj}^*-\sum_{k \in H}y_{pkqj}^*  , x_{pi}^*-\sum_{k \in H}y_{piqk}^* \}$}
                \STATE{$j \leftarrow j +1$}
            	\ENDWHILE
        	\ENDFOR
        \ENDFOR
      \RETURN{$\biy^*$}
   \end{algorithmic}
\end{algorithm}

Next we consider that we construct a vector $\biy^*$ from the optimal solution $\bix^*$ by \textbf{Algorithm~\ref{alg:x=>y}}.
A vector $\biy^*$  is  optimal to our subproblem HTP.
Note that  we need \textbf{Algorithm~\ref{alg:x=>y}} only for approximation analysis and we don't use it to obtain an approximate solution.
Then we have the following lemma.
\begin{lemma}\label{lemma:x=>y}
Let $\bm{x}^*$ be  an optimal solution of LRP and let $\biy^*$ be a vector obtained by  \textbf{Algorithm~\ref{alg:x=>y}}.
For any pair of $(p,q)\in N^2 \ (p \not =q)$ , we have the following inequality:
\[
\sum_{(i,j)\in H^2:i\not = j}(\ell_i+\ell_j)y_{piqj}^*=\sum_{i \in H}\ell_i|x_{pi}^*-x_{qi}^*|.
\]
\end{lemma}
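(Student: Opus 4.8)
The plan is to analyze \textbf{Algorithm~\ref{alg:x=>y}} directly and show that the vector $\biy^*$ it produces is, for each fixed pair $(p,q)$, the north-west corner rule solution of a Hitchcock transportation problem whose supplies are $(x_{pi}^*)_{i\in H}$ and demands are $(x_{qj}^*)_{j\in H}$ (both summing to $1$). Once this is established, the quantity $\sum_{(i,j):i\neq j}(\ell_i+\ell_j)y_{piqj}^*$ is the transportation cost of $\biy^*$ with respect to the cost matrix $\hat{D}=(d_{ij})$ with $d_{ij}=\ell_i+\ell_j$ for $i\neq j$ and $d_{ii}=0$. So the first step is to observe that $\hat{D}$ is a Monge matrix: since $\ell_1\le\ell_2\le\cdots\le\ell_h$, for $i<i'$ and $j<j'$ one checks $d_{ij}+d_{i'j'}\le d_{ij'}+d_{i'j}$ by a short case analysis on whether the diagonal is hit (the only nontrivial cases are when some of the four indices coincide, and there the inequality reduces to $\ell$'s being monotone). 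By Theorem~\ref{thm:monge}, the north-west corner rule solution is then optimal, and in fact I will argue that for this particular cost matrix the north-west corner solution's cost equals $\sum_{i\in H}\ell_i|x_{pi}^*-x_{qi}^*|$ on the nose.

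The cleanest route to the equality is to give an explicit lower bound matching the construction. For the transportation problem with supplies $a_i=x_{pi}^*$ and demands $b_j=x_{qj}^*$, consider the feasible dual / the following identity: any feasible $\biy$ routes, for each index $k$, at least $\min\{a_k,b_k\}$ units "on the diagonal" at cost $0$, and the remaining mass $\tfrac12\sum_k|a_k-b_k|$ of "surplus" supply must be matched to an equal mass of "surplus" demand on distinct indices, each unit of such flow from $i$ to $j$ ($i\ne j$) costing $\ell_i+\ell_j\ge \ell_i$ from the supply side and $\ge\ell_j$ from the demand side. Summing, the cost of any feasible $\biy$ is at least $\sum_{i:a_i>b_i}\ell_i(a_i-b_i)+\sum_{j:b_j>a_j}\ell_j(b_j-a_j)=\sum_i\ell_i|a_i-b_i|$. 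This gives the $\ge$ direction for all feasible solutions, in particular for $\biy^*$.

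For the reverse inequality I would trace through \textbf{Algorithm~\ref{alg:x=>y}}: after the initialization $y_{piqi}^*=\min\{x_{pi}^*,x_{qi}^*\}$, the residual supplies and demands are exactly $(x_{pi}^*-\min\{x_{pi}^*,x_{qi}^*\})_i$ and $(x_{qj}^*-\min\{x_{pj}^*,x_{qj}^*\})_j$; by construction the residual supply at index $i$ is nonzero only if $x_{pi}^*>x_{qi}^*$, and the residual demand at $j$ only if $x_{qj}^*>x_{pj}^*$, so these index sets are disjoint and every subsequent assignment is strictly off-diagonal with $i\ne j$. Each such off-diagonal unit from $i$ to $j$ contributes $\ell_i+\ell_j$; grouping the contributions by the supply endpoint gives $\sum_i\ell_i(x_{pi}^*-x_{qi}^*)^+$ and grouping by the demand endpoint gives $\sum_j\ell_j(x_{qj}^*-x_{pj}^*)^+$, and these sum to $\sum_i\ell_i|x_{pi}^*-x_{qi}^*|$. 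Hence $\biy^*$ meets the lower bound with equality, proving the claim.

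The main obstacle I anticipate is the bookkeeping in the last step: one must be careful that \textbf{Algorithm~\ref{alg:x=>y}}'s while-loop indeed produces a complete transportation plan (total flow $1$, all residuals exhausted) and that no off-diagonal flow ever lands on a diagonal entry — i.e. that the disjointness of the "surplus supply" and "surplus demand" index sets is preserved throughout, so that the simple "cost $=\ell_i+\ell_j$, charge $\ell_i$ to $i$ and $\ell_j$ to $j$" accounting is valid. This is essentially the standard property of the north-west corner rule (it produces a basic feasible solution touching the diagonal minima first), but writing it out cleanly for this specific algorithm, rather than invoking it as a black box, is where the care is needed. Everything else — Monge-ness of $\hat{D}$ and the matching lower bound — is routine.
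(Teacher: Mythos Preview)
Your ``reverse inequality'' paragraph is the whole proof, and it coincides with the paper's argument: from Algorithm~\ref{alg:x=>y} one has $y_{piqi}^*=\min\{x_{pi}^*,x_{qi}^*\}$, hence the off-diagonal row and column sums are $\sum_{j\ne i}y_{piqj}^*=(x_{pi}^*-x_{qi}^*)^+$ and $\sum_{i\ne j}y_{piqj}^*=(x_{qj}^*-x_{pj}^*)^+$; splitting $\ell_i+\ell_j$ and summing over rows and over columns separately yields $\sum_i\ell_i|x_{pi}^*-x_{qi}^*|$ on the nose. That computes the left-hand side exactly, so no separate lower-bound argument is needed at all.

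The first half of your plan should be discarded, and it contains errors. The matrix $\hat D$ with $d_{ii}=0$, $d_{ij}=\ell_i+\ell_j$ for $i\ne j$ is \emph{not} Monge in general: with $\ell=(1,2,3)$ and $i=1$, $i'=2$, $j=2$, $j'=3$ one gets $d_{12}+d_{23}=8>4=d_{13}+d_{22}$; your case analysis misses the situation where an anti-diagonal entry lands on the zero diagonal. Nor is $\biy^*$ the north-west corner rule solution: Algorithm~\ref{alg:x=>y} forces $y_{piqi}^*=\min\{x_{pi}^*,x_{qi}^*\}$ on every diagonal entry, whereas the north-west rule need not (take $a=(0.5,0.3,0.2)$, $b=(0.1,0.1,0.8)$: the north-west rule gives $y_{22}=0$, not $0.1$). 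Fortunately none of this is load-bearing: the lemma is an identity for this particular $\biy^*$, not an optimality statement, and your direct computation already establishes it without appeal to Monge, to Theorem~\ref{thm:monge}, or to any lower bound.
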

\noindent Proof is omitted here (see Appendix). 

Now we are ready to prove the following lemma.
\begin{lemma}
\label{lemma:different}
Let $\textbf{X}$ be a vector of random variables obtained by the proposed algorithm.
Let $\bix^*$ be an optimal solution of LRP, and  let $\bm{y}^*$ be  vectors obtained from the optimal solution $\bm{x}^*$ of LRP by \textbf{Algorithm~\ref{alg:x=>y}}.
For any distinct  pair of non-hubs $(p,q) \in N^2 \ (p \not =q)$,
any real number $r>1$,
and any real number $\lambda \in [0,1)$,
we have the following inequality :
\[
{\rm E}\left[\sum_{\{\kappa,\kappa'\} \in [\kappa_{\max}] :\kappa \not=\kappa'}\sum_{i \in H_{\kappa}}\sum_{j \in H_{\kappa'}}(\ell_i+\ell_j)\boldmath{X_{pi}}\boldmath{X_{qj}} \right] \leq \frac{r^2+1}{r^2-1}\sum_{i \in H}u(\lambda,i)|x_{pi}^*-x_{qi}^*|.
\]
\end{lemma}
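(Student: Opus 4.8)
The plan is to bound the left-hand side by chaining the two main ingredients already assembled: Lemma~\ref{lemma:expect=NW}, which replaces the correlated quantities ${\rm E}[X_{pi}X_{qj}]$ by the north-west corner values $y_{piqj}^{NW}$, and Lemma~\ref{lemma:k}, which controls the cost $(\ell_i+\ell_j)$ — or rather its surrogate $u(\lambda,i)+u(\lambda,j)$ — by the line-metric cost $\hat c_{ij}$. So first I would write
\[
{\rm E}\!\left[\sum_{\{\kappa,\kappa'\}:\kappa\neq\kappa'}\sum_{i\in H_\kappa}\sum_{j\in H_{\kappa'}}(\ell_i+\ell_j)X_{pi}X_{qj}\right]
=\sum_{\kappa\neq\kappa'}\sum_{i\in H_\kappa}\sum_{j\in H_{\kappa'}}(\ell_i+\ell_j){\rm E}[X_{pi}X_{qj}],
\]
and bound each $\ell_i$ by $u(\lambda,i)$: since hub $i\in H_\kappa$ with $\kappa\ge 1$ has $\ell_i< r^{(\kappa-1)+\lambda}=u(\lambda,i)$, we get $\ell_i+\ell_j< u(\lambda,i)+u(\lambda,j)$ (and for $\kappa=0$ both sides vanish). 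Then apply Lemma~\ref{lemma:expect=NW} to each ordered pair $(\kappa,\kappa')$ to replace ${\rm E}[X_{pi}X_{qj}]$ by $y_{piqj}^{NW}$, obtaining an upper bound
\[
\sum_{\kappa\neq\kappa'}\sum_{i\in H_\kappa}\sum_{j\in H_{\kappa'}}\bigl(u(\lambda,i)+u(\lambda,j)\bigr)y_{piqj}^{NW}.
\]

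Next I would bring in Lemma~\ref{lemma:k}: for $i\in H_\kappa$, $j\in H_{\kappa'}$ with $\kappa\neq\kappa'$ we have $u(\lambda,i)+u(\lambda,j)\le\frac{r^2+1}{r^2-1}\hat c_{ij}$, so the bound becomes
\[
\frac{r^2+1}{r^2-1}\sum_{\kappa\neq\kappa'}\sum_{i\in H_\kappa}\sum_{j\in H_{\kappa'}}\hat c_{ij}\,y_{piqj}^{NW}
\le\frac{r^2+1}{r^2-1}\sum_{(i,j)\in H^2:\,i\neq j}\hat c_{ij}\,y_{piqj}^{NW},
\]
where the last inequality adds back the diagonal-class terms $\kappa=\kappa'$, which only increases the right-hand side (all summands nonnegative). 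Now the key observation is that $y^{NW}$, being a north-west corner rule solution, is — by Theorem~\ref{thm:monge} and the Remark that $\hat C$ is a Monge matrix — an optimal solution of the Hitchcock transportation problem with cost matrix $\hat C$ and margins given by $\bix^*$. Hence $\sum_{(i,j):i\neq j}\hat c_{ij}y_{piqj}^{NW}$ is at most the value of \emph{any} feasible transportation plan with those margins, in particular the plan $\biy^*$ produced by Algorithm~\ref{alg:x=>y}. Actually I should be slightly careful here: $y^*$ is the NW-corner solution for the true margins, so the cleanest route is to note $y^{NW}=y^*$ (same rule, same margins) and hence
\[
\sum_{(i,j):i\neq j}\hat c_{ij}\,y_{piqj}^{NW}=\sum_{(i,j):i\neq j}\hat c_{ij}\,y_{piqj}^{*}.
\]
Finally I would invoke the structure of $\hat C$ as a line metric on the points $u(\lambda,\cdot)$: a north-west corner / Monge argument identical to Lemma~\ref{lemma:x=>y} (with $\ell_i$ replaced by $u(\lambda,i)$ and the distance replaced by $\hat c_{ij}$) gives
\[
\sum_{(i,j):i\neq j}\hat c_{ij}\,y_{piqj}^{*}=\sum_{i\in H}u(\lambda,i)\,|x_{pi}^*-x_{qi}^*|,
\]
which completes the chain and yields exactly the claimed inequality.

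The step I expect to be the main obstacle is the last displayed identity, $\sum_{(i,j):i\neq j}\hat c_{ij}y^*_{piqj}=\sum_i u(\lambda,i)|x^*_{pi}-x^*_{qi}|$. Lemma~\ref{lemma:x=>y} gives the analogous statement with $\ell_i$ in place of $u(\lambda,i)$ and the line metric $|\ell_i-\ell_j|$ (a Monge matrix) in place of $\hat C$, but $\hat C$ is not simply $(|\ell_i-\ell_j|)$: it is the line metric coming from embedding hub $i$ at coordinate $\pm u(\lambda,i)$, with the sign determined by the parity of $\alpha(\lambda,i)$ — so within a class it behaves like a difference of $u$-values and across classes of differing parity like a sum. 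I would need to verify that the north-west corner solution $y^*$, built by Algorithm~\ref{alg:x=>y} using the hub order $\pi$ (which, crucially, is the order that interleaves class labels as $(\dots,4,2,0,1,3,\dots)$), respects this line embedding, so that the telescoping argument of Lemma~\ref{lemma:x=>y} goes through verbatim with $\hat c_{ij}$ summing up the "gaps" between consecutive $u$-coordinates along $\pi$. Since the excerpt defers the proof of Lemma~\ref{lemma:x=>y} and the HTP setup to an appendix, I would either cite that appendix machinery directly or reproduce the one-line telescoping computation; everything else in the proof is routine substitution and the monotonicity of dropping/adding nonnegative terms.
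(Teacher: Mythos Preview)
Your overall chain is the same as the paper's --- bound $\ell_i$ by $u(\lambda,i)$, apply Lemma~\ref{lemma:expect=NW} blockwise (which is legitimate precisely because $u(\lambda,i)+u(\lambda,j)$ is constant on each block $H_\kappa\times H_{\kappa'}$; you should say this explicitly, since Lemma~\ref{lemma:expect=NW} only controls the \emph{block sum} of ${\rm E}[X_{pi}X_{qj}]$, not individual terms), then Lemma~\ref{lemma:k}, then the Monge property, then Lemma~\ref{lemma:x=>y}. Two points, however, need correcting.

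First, your ``cleanest route'' assertion that $y^{NW}=y^*$ is false. Algorithm~\ref{alg:x=>y} is \emph{not} the north-west corner rule: it first loads the diagonal with $y^*_{piqi}=\min\{x^*_{pi},x^*_{qi}\}$ and only then fills in the rest, whereas the NW rule starts at the $\pi$-first corner and moves monotonically. Your earlier sentence was already the right argument and is exactly what the paper uses: $\hat C$ is Monge, so by Theorem~\ref{thm:monge} the NW solution is optimal for $\mathrm{HTP}(\bix_p^*,\bix_q^*,\hat C)$, and since $y^*$ is feasible for that same HTP you get $\sum\hat c_{ij}y^{NW}_{piqj}\le\sum\hat c_{ij}y^{*}_{piqj}$. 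Keep the inequality and drop the equality claim.

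Second, the step you flag as ``the main obstacle'', namely
\[
\sum_{(i,j):i\neq j}\hat c_{ij}\,y^{*}_{piqj}\;=\;\sum_{i\in H}u(\lambda,i)\,|x^{*}_{pi}-x^{*}_{qi}|,
\]
is in general \emph{false} as an equality (take two hubs in classes of the same parity, where $\hat c_{ij}=|u(\lambda,i)-u(\lambda,j)|<u(\lambda,i)+u(\lambda,j)$), and you do not need it. The paper finishes far more simply: from the definition one always has $\hat c_{ij}\le u(\lambda,i)+u(\lambda,j)$, hence
\[
\sum_{(i,j):i\neq j}\hat c_{ij}\,y^{*}_{piqj}\;\le\;\sum_{(i,j):i\neq j}\bigl(u(\lambda,i)+u(\lambda,j)\bigr)y^{*}_{piqj}\;=\;\sum_{i\in H}u(\lambda,i)\,|x^{*}_{pi}-x^{*}_{qi}|,
\]
the last equality being Lemma~\ref{lemma:x=>y} applied verbatim with $u(\lambda,\cdot)$ in place of $\ell_\cdot$ (its proof only uses the row/column identities for $y^*$, not the specific values $\ell_i$). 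So your worry about the parity-dependent sign structure of the line embedding and whether the telescoping ``goes through'' is unnecessary: that structure is used only in Lemma~\ref{lemma:k}, and once you have passed through $\hat c$ you can immediately relax back to $u(\lambda,i)+u(\lambda,j)$.
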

\begin{proof}
First, we prove the following inequality for any pair of integers  $\{\kappa,\kappa'\} \in [\kappa_{\max}]\ (\kappa \not=\kappa')$ and any  pair of non-hubs $(p,q) \in N^2 \ (p \not=q)$ :
\[
{\rm E}\left[ \sum_{i \in H_{\kappa}}\sum_{j \in H_{\kappa'}}(\ell_i+\ell_j)\boldmath{X_{pi}}\boldmath{X_{qj}} \right] \leq \frac{r^2+1}{r^2-1}\sum_{i \in H_{\kappa}}\sum_{j \in H_{\kappa'}}(u(\lambda,i)+u(\lambda,j)) y_{piqj}^{*}. \tag{5.1} \]

\noindent (Case i) $\kappa,\kappa' \in \{1,2,\ldots,\kappa_{\max} \} \ (\kappa \not= \kappa')$ \\
\noindent In this case, we have the following inequalities from the definition of $u(\lambda,i) \ (i  \in H)$.
\begin{align}
& {\rm E}\left[\sum_{i \in H_{\kappa}}\sum_{j \in H_{\kappa'}}(\ell_i+\ell_j)\boldmath{X_{pi}}\boldmath{X_{qj}} \right] \notag\\
& \leq \sum_{i \in H_{\kappa}}\sum_{j \in H_{\kappa'}}(u(\lambda,i)+u(\lambda,j)) {\rm E}[X_{pi}X_{qj}] \notag\\
& = \sum_{i \in H_{\kappa}}\sum_{j \in H_{\kappa'}}(r^{\kappa+\lambda-1}+r^{\kappa'+\lambda-1}){\rm E}[X_{pi}X_{qj}] \notag \\
& =(r^{\kappa+\lambda-1}+r^{\kappa'+\lambda-1})\sum_{i \in H_{\kappa}}\sum_{j \in H_{\kappa'}}{\rm E}[X_{pi}X_{qj}]. \notag
\end{align}

 
 Using Lemma~\ref{lemma:expect=NW},
 Lemma~\ref{lemma:k},
 and Theorem~\ref{thm:monge},
 we have the following inequalities.
 \begin{align}
& (r^{\kappa+\lambda-1}+r^{\kappa'+\lambda-1})\sum_{i \in H_{\kappa}}\sum_{j \in H_{\kappa'}}{\rm E}[X_{pi}X_{qj}] \\\notag
 &\leq (r^{\kappa+\lambda-1}+r^{\kappa'+\lambda-1})\sum_{i \in H_{\kappa}}\sum_{j \in H_{\kappa'}}y_{piqj}^{NW} \ (\because \rm{ Lemma~\ref{lemma:expect=NW}})\notag \\
  &=\sum_{i \in H_{\kappa}}\sum_{j \in H_{\kappa'}} (u(\lambda,i)+u(\lambda,j)) y_{piqj}^{NW} \notag \\
 & \leq  \frac{r^2+1}{r^2-1}\sum_{i \in H_{\kappa}}\sum_{j \in H_{\kappa'}}\hat{c}_{ij} y_{piqj}^{NW} \ (\because {\rm Lemma~\ref{lemma:k}}) \notag \\
 & \leq  \frac{r^2+1}{r^2-1}\sum_{i \in H_{\kappa}}\sum_{j \in H_{\kappa'}}\hat{c}_{ij} y_{piqj}^{*} \ (\because \hat{C} {\rm \ is\  a\ Monge\ matrix\  and\ Theorem~\ref{thm:monge}}.) \notag\\
& \leq \frac{r^2+1}{r^2-1} \sum_{i \in H_{\kappa}}\sum_{j \in H_{\kappa'}}(u(\lambda,i)+u(\lambda,j))y_{piqj}^{*} \notag
\end{align}
Then we obtained inequality (5.1) for this case.

\noindent(Case ii) $\kappa=0$ or $ \kappa'=0$ \\
We can show  inequality $(5.1)$ for this case by substituting $r^{\kappa+\lambda-1}+r^{\kappa'+\lambda-1}$ in (Case i) by either $r^{\kappa+\lambda-1} $ or $r^{\kappa'+\lambda-1 }$.

Then we obtain that 
\begin{align}
&{\rm E}\left[\sum_{\{\kappa,\kappa'\} \in [\kappa_{\max}] :\kappa \not=\kappa'}\sum_{i \in H_{\kappa}}\sum_{j \in H_{\kappa'}}(\ell_i+\ell_j)\boldmath{X_{pi}}\boldmath{X_{qj}} \right] \notag \\ &\leq \frac{r^2+1}{r^2-1} \sum_{\{\kappa,\kappa'\} \in [\kappa_{\max}]:\kappa \not = \kappa'}\sum_{i \in H_{\kappa}}\sum_{j \in H_{\kappa'}}(u(\lambda,i)+u(\lambda,j))y_{piqj}^{*}  \ (\because {\rm  inequality }\ (5.1)) \notag \\
&\leq \frac{r^2+1}{r^2-1} \sum_{(i,j)\in H^2: i \not =j} (u(\lambda,i)+u(\lambda,j)) y_{piqj}^{*}. \notag\\
& =\frac{r^2+1}{r^2-1} \sum_{i \in H}u(\lambda,i)|x_{pi}^*-x_{qj}^*| . \ (\because {\rm Lemma~\ref{lemma:x=>y}} {\rm \ for \ } u(\lambda,i) {\rm \ instead \ of\ } \ell_i )\notag
\end{align}
\end{proof}

Now, we are ready to show our main theorem.
\begin{theorem}
\label{theorem:main}
The proposed algorithm is $\min \{\frac{r-1}{\log r}\left( 2+ \frac{r^2+1}{r^2-1}\right) | r>1 \}  ( \approx 5.2809 \ {\rm at}\ r \approx 1.91065)$--approximation algorithm for star-star hub-and-spoke network design problems and star-metric labeling problems. 
\end{theorem}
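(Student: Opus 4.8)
The plan is to bound the expected cost of the algorithm's solution $\boldsymbol{X}$ against the optimal value of LRP, which is itself a lower bound on the optimum of SHP. The objective splits, for each ordered pair $(p,q)$, into an assignment part $\sum_i c_{pi}x_{pi} + \sum_j c_{qj}x_{qj}$ and a separation part $\sum_k \ell_k |x_{pk}-x_{qk}|$. By Lemma~\ref{lemma:prob} we have $\Pr[\boldsymbol{X}_{pi}=1]=x_{pi}^*$, so the expected assignment cost equals exactly the LRP assignment cost term by term, contributing a factor $1$ to that block. The work is all in the separation term: the realized separation cost for a pair $(p,q)$ is $\sum_{i,j}(\ell_i+\ell_j)\boldsymbol{X}_{pi}\boldsymbol{X}_{qj}$ (using $c_{ij}=\ell_i+\ell_j$ and $c_{ii}=0$), and we split this sum according to whether $i,j$ lie in the same class or different classes. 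The same-class part is controlled by Lemma~\ref{lemma:same}, giving $\le 2\sum_i u(\lambda,i)|x_{pi}^*-x_{qi}^*|$, and the different-class part is controlled by Lemma~\ref{lemma:different}, giving $\le \frac{r^2+1}{r^2-1}\sum_i u(\lambda,i)|x_{pi}^*-x_{qi}^*|$. Adding these, the expected separation cost for pair $(p,q)$, conditioned on $\lambda$, is at most $\bigl(2 + \frac{r^2+1}{r^2-1}\bigr)\sum_i u(\lambda,i)|x_{pi}^*-x_{qi}^*|$.

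Next I would take the expectation over the random choice $\lambda \in [0,1)$ (uniform), which is the only remaining source of randomness tying $u(\lambda,i)$ to $\ell_i$. For a hub $i$ with $\ell_i \ge 1$, we have $u(\lambda,i)=r^{\alpha(\lambda,i)+\lambda-1}$, and by Definition~\ref{definitionA} the class $\alpha(\lambda,i)$ is determined so that, as $\lambda$ ranges over $[0,1)$, the exponent $\alpha(\lambda,i)+\lambda-1$ ranges over an interval of length $1$ straddling $\log_r \ell_i$; more precisely $u(\lambda,i) \in [\ell_i, r\,\ell_i)$ with the distribution of $\log_r u(\lambda,i) - \log_r \ell_i$ uniform on $[0,1)$. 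Hence $\mathbb{E}_\lambda[u(\lambda,i)] = \ell_i \int_0^1 r^t\,dt = \ell_i \cdot \frac{r-1}{\ln r}$, and this identity holds trivially for $\ell_i=0$ as well. Combining with the conditional bound above and summing over all pairs $(p,q)$ weighted by $w_{pq}$, the expected total separation cost of $\boldsymbol{X}$ is at most $\frac{r-1}{\ln r}\bigl(2+\frac{r^2+1}{r^2-1}\bigr)$ times the LRP separation cost, while the expected assignment cost equals the LRP assignment cost. Since $\frac{r-1}{\ln r}\bigl(2+\frac{r^2+1}{r^2-1}\bigr) \ge 1$ for the relevant $r$, the whole expected cost is at most this factor times the LRP optimum, hence at most this factor times $\mathrm{OPT}(\text{SHP})$. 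Finally, optimizing the free parameter $r>1$ and using that LRP is solvable in polynomial time and all the algorithm's steps run in polynomial time gives the stated $\min\{\frac{r-1}{\log r}(2+\frac{r^2+1}{r^2-1}) \mid r>1\} \approx 5.2809$ ratio.

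The main obstacle is making the expectation-over-$\lambda$ step fully rigorous: one must verify that Lemmas~\ref{lemma:same} and~\ref{lemma:different} hold for \emph{every} fixed $\lambda$ (they are stated that way), so that we may legitimately condition on $\lambda$, apply the per-$\lambda$ bound, and only then integrate; and one must check carefully that the change of variables $\lambda \mapsto \log_r u(\lambda,i)$ behaves correctly across the class boundary at $\ell_i=1$ where the $\max\{\cdot,0\}$ in Definition~\ref{definitionA} kicks in, so that the clean identity $\mathbb{E}_\lambda[u(\lambda,i)]=\frac{r-1}{\ln r}\ell_i$ is valid uniformly. A secondary point to state cleanly is that the two rounding stages (Algorithm~\ref{alg:class} and Algorithm~\ref{alg:assign}) use independent randomness from the class construction in Step~1, so the conditioning structure is as claimed. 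Everything else is assembling the already-proven lemmas and a one-line optimization over $r$.
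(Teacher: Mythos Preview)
Your proposal is correct and follows essentially the same route as the paper: split the expected cost into the assignment part and the separation part, use Lemma~\ref{lemma:prob} for the former, apply Lemmas~\ref{lemma:same} and~\ref{lemma:different} to the same-class and different-class pieces of the latter (conditionally on $\lambda$), then integrate over the uniform $\lambda$ to convert $u(\lambda,i)$ into $\frac{r-1}{\log r}\,\ell_i$, and finally optimize over $r>1$. Your handling of $\mathbb{E}_\lambda[u(\lambda,i)]$ and your explicit remark that the factor $\frac{r-1}{\log r}\bigl(2+\tfrac{r^2+1}{r^2-1}\bigr)\ge 1$ is needed to absorb the assignment terms are in fact more careful than the paper, which simply writes $\mathrm{E}[u(\Lambda,k)]=\int_0^1 r^{\Lambda}\ell_k\,d\Lambda$ and passes directly to the product form.
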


\begin{proof}
Let $\bm{X}$ be a vector of random variables obtained by the proposed algorithm and let $(\bix^*,\biy^*)$ be an optimal solution of LRP.  
For any real number $\lambda \in [0,1)$, we have that
\begin{align}
&{\rm E}[Z] ={\rm E}\left[\sum_{(p,q) \in N^2:p \not =q}w_{pq}\left(\sum_{i \in H}c_{pi}X_{pi}+\sum_{j \in H}c_{qj}X_{qj}+\sum_{(i,j)\in H^2:i\not = j}(\ell_i+\ell_j) X_{pi}X_{qj}\right)\right] \notag \\
&=\sum_{(p,q) \in N^2:p \not =q}w_{pq}\left(\sum_{i \in H}c_{pi}x_{pi}^*+\sum_{j \in H}c_{qj}x_{qj}^*+{\rm E}\left[\sum_{\kappa \in [\kappa_{\max}]}\sum_{(i,j)\in H^2_{\kappa}:i\not = j}(\ell_i+\ell_j)X_{pi}X_{qj}\right]\right. \notag \\
&\hspace{4cm} \left. +{\rm E}\left[\sum_{\{\kappa,\kappa'\} \in [\kappa_{\max}] :\kappa \not=\kappa'}\sum_{i \in H_{\kappa}}\sum_{j \in H_{\kappa'}}(\ell_i+\ell_j)\boldmath{X_{pi}}\boldmath{X_{qj}} \right] \right) (\because {\rm Lemma~\ref{lemma:prob}})\notag \\
& \leq \sum_{(p,q) \in N^2:p \not =q}w_{pq} \left( \sum_{i \in h}c_{pi}x_{pi}^* +\sum_{j \in H}c_{qj}x_{qj}^* +2\sum_{i \in H }u(\lambda,i)|x_{pi}^*-x_{qi}^*|\right. \notag \\
&\hspace{4cm} \left. + \frac{r^2+1}{r^2-1}\sum_{k \in H}u(\lambda,k)|x_{pk}^*-x_{qk}^*| \right)
 \ (\because {\rm Lemma~\ref{lemma:same}} \ {\rm and} \ {\rm Lemma~\ref{lemma:different}})\notag \\
&= \sum_{(p,q) \in N^2:p \not =q}w_{pq} \left( \sum_{i \in H}c_{pi}x_{pi}^* +\sum_{j \in H}c_{qj}x_{qj}^* +\left(2+\frac{r^2+1}{r^2-1} \right)\sum_{k \in H }u(\lambda,k)|x_{pk}^*-x_{qk}^*| \right) \tag{6.1}\end{align}
where $Z$ denotes the objective value of a solution obtained by the proposed algorithm.
Let $\Lambda \in [0,1)$ be a uniform random variable.
The expected value  of $u(\Lambda,k)$ for all $ k \in H$ and for all $ r>1 $ is  ${\rm E}[u(\Lambda,k)] =\int_0^1 r^{\Lambda}\ell_k \ d\Lambda=\frac{r-1}{\log r}\ell_k.$

Thus, from the above discussion and inequality (6.1) which holds for any  $\Lambda \in [0,1)$, we have that
\begin{align}
{\rm E}[Z]& \leq \sum_{(p,q) \in N^2:p \not =q}w_{pq} \left( \sum_{i \in h}c_{pi}x_{pi}^* +\sum_{j \in H}c_{qj}x_{qj}^* +\left(2+\frac{r^2+1}{r^2-1} \right)\sum_{k \in H }{\rm E}[u(\Lambda,k)]|x_{pk}^*-x_{qk}^*| . \right) \notag \\
& =\sum_{(p,q) \in N^2:p \not =q}w_{pq} \left( \sum_{i \in h}c_{pi}x_{pi}^* +\sum_{j \in H}c_{qj}x_{qj}^* +\frac{r-1}{\log r}\left(2+\frac{r^2+1}{r^2-1} \right)\sum_{k\in H }\ell_{k}|x_{pk}^*-x_{qk}^*| . \right) \notag \\
& = \min \left\{\frac{r-1}{\log r}\left( 2+ \frac{r^2+1}{r^2-1}\right) | r>1 \right\} ({\rm optimal \ value \ of \ LRP}) \notag \\
&\leq \min \left\{\frac{r-1}{\log r}\left( 2+ \frac{r^2+1}{r^2-1}\right) | r>1 \right\}   ({\rm optimal \ value \ of \ the \ original \ problem \ SHP}). \notag
\end{align}
Note that when $r>1$, $f(r)=\frac{r-1}{\log r}\left( 2+ \frac{r^2+1}{r^2-1}\right)$ is minimized at $r^* \approx 1.91065$ and we get $f(r^*) \approx 5.2809$. 
Then we obtain the desired result.
\end{proof}


\section{Conclusion}\label{chap:conclusion}
In this paper,
we have studied hub-and-spoke network design problems, motivated by the application to achieve efficient transportation systems.
we considered the case where the set of hubs forms a star, and introduced a star-star hub-and-spoke network design problem and star-metric labeling problem.
The star-metric labeling problem includes the uniform 
labeling problem which is still NP-hard.
 We proposed $\min \left\{\frac{r-1}{\log r}\left( 2+ \frac{r^2+1}{r^2-1}\right) | r>1 \right\}  ( \approx 5.2809 \ {\rm at}\ r \approx 1.91065)$--approximation algorithm for star-star hub-and-spoke network design problems and star-metric labeling problems.
Our algorithms solve a linear relaxation problem and apply dependent rounding procedures.




\section*{Appendix}
\addcontentsline{toc}{section}{Appendix}

\subsection*{Hitchcock Transportation Problems and North-West Corner Rule}
\label{appendix:htp}
A Hitchcock transportation problem is defined 
	on a complete bipartite graph 
	consists of a set of supply points  $A=\{1,2,\ldots ,I\}$ 
	and a set of demand points $B=\{1,2,\ldots , J\}$.
Given a pair of non-negative vectors
	 $(\boldmath{a},\boldmath{b}) \in \mathbb{R}^ I \times \mathbb{R}^J$
	satisfying $\sum_{i =1}^I a_i = \sum_{j=1}^J b_j$ 
	and an $I \times J$ cost matrix $C=(c_{ij})$,
	a Hitchcock transportation problem is formulated as follows: 

\begin{alignat*}{4}
 \mbox{\rm HTP}(\boldmath{a},\boldmath{b}, C):\quad
 & \mbox{\rm min.\quad }
&& 
		\sum_{i=1}^I \sum_{j=1}^J c_{ij}y_{ij}  
	\\
& \mbox{\rm s.~t.}		
&&   \sum_{j=1}^J y_{ij} = a_i  & \quad & (i\in \{1,2,\ldots, I\}), \\ 
&&&  \sum_{i=1}^I y_{ij} = b_j  & \quad & (j\in \{1,2,\ldots, J\}), \\
&&& y_{ij} \geq 0		
	&&(\forall (i,j) \in \{1,2,\ldots,I\} \times \{1,2,\ldots,J\}),
\end{alignat*}
where $y_{ij}$ denotes the amount of flow 
	from a supply point $i \in A$
	to a demand point $j \in B$.

\begin{algorithm}\label{algorithm:northwest}
	\textbf{Algorithm NWCR}
\begin{description}
\item[\textbf{Step 1:}]
	Set all the elements of matrix $Y$ to $0$ and 
	set the target element $y_{ij}$ to $y_{11}$ (top-left corner).
\item[\textbf{Step 2:}]
	Allocate a maximum possible amount of transshipment to the target element \\
	without making the row or column total of the matrix $Y$ exceed 
	the supply or demand respectively.
\item[\textbf{Step 3:}]
If the target element is $y_{IJ}$ (the south-east corner element),
	then stop.
\item[\textbf{Step 4:}]
	Denote the target element by $y_{ij}$.
	If the sum total of $j$th column of $Y$ is equal to $b_j$,
	set the target element to $y_{i j+1}$.
	Else (the sum total of $Y$ of $i$th row is equal to $a_i$),\\
	set the target element to $y_{i+1 j}$.
	Go to Step 2.
\end{description}
\end{algorithm}

We describe north-west corner rule in
	Algorithm~NWCR, which finds
	 a feasible solution of Hitchcock transportation problem
	HTP($\boldmath{a},\boldmath{b},C$).
It is easy to see that the north-west corner rule solution $Y=(y_{ij})$
  satisfies the equalities that 
\[
   \sum_{i=1}^{i'} \sum_{j=1}^{j'} y_{ij}
   = \min \left\{
   \sum_{i=1}^{i'} a_i\;, \;\; \sum_{j=1}^{j'} b_j
          \right\} \;\; 
    ( \forall (i', j') \in \{1,2,\ldots,I\} \times \{1,2,\ldots,J\} ).
\]
 Since the coefficient matrix 
	of the above equality system is nonsingular,
	the north-west corner rule solution is a unique solution 
	of the above equality system.
Thus, the above system of equalities has a unique solution 
	which is feasible to 	HTP($\boldmath{a},\boldmath{b},C$).
    
Next we show that the subproblem of our original problem can be written as a Hitchcock transportation problem.
Let $(\boldmath{x}$, $\boldmath{y})$ be a feasible solution
	of linear relaxation problem.
For any $p \in N$, 
	$\boldmath{x}_p$ denotes a subvector of $\boldmath{x}$
	defined by $(x_{p1}, x_{p2}, \ldots , x_{ph})$.  
When we fix variables  $\boldmath{x}$ in LRP
	to $\boldmath{x}$ and given a pair of $(p,q)\in N^2 
    \ (p \not=q)$, 
	we can decompose the obtained problem into
	Hitchcock transportation problems
	$\{\mbox{HTP}(\boldmath{x}_p, \boldmath{x}_q, \hat{C}(=\hat{c}_{ij}))
		 \mid (p,q) \in N^2\}$
	where 
 \begin{alignat*}{4}
 \mbox{\rm {HTP}$(\boldmath{x}_p, \boldmath{x}_q, \hat{C})$: \quad}
& \mbox{\rm min.\quad }
&& 
		\sum_{i \in H} \sum_{j \in H} \hat{c}_{ij} y_{piqj}  
	\\
& \mbox{\rm s.~t.}		
&&  \sum_{j \in H}  y_{piqj} = x_{pi}  & \quad & (\forall i \in H), \\ 
&&& \sum_{i \in H}  y_{piqj} = x_{qj}  & \quad & (\forall j \in H), \\ 
&&&  y_{piqj} \geq 0			&&(\forall (i,j) \in {H^2}).\\
\end{alignat*}

\subsection*{Monge Property}
We give the definition of a Monge matrix.
A comprehensive research on the Monge property 
	appears in a recent survey~\cite{BURKARD1996}.   
Matrices with this property arise quite often in practical applications, especially in geometric settings.

\begin{definition}\label{definition:monge}
An $m \times n$ matrix $C$ is a Monge matrix
	if and only if $C$ satisfies the so-called Monge property
\[
	c_{ij}+c_{i'j'} \leq c_{ij'}+c_{i'j}\quad\quad\quad
\mbox{\rm for all}\quad1 \leq i <i' \leq m, 1 \leq j < j' \leq n.
\]
\end{definition}
\noindent Note that the north-west corner rule produces an optimal solution of Hitchcock transportation problems
if the cost matrix is a Monge matrix, so we can obtain an optimal solution of HTP$(\boldmath{x}_p, \boldmath{x}_q, \hat{C})$ by north-west corner rule~\cite{BEIN1995}.

\subsection*{Proof of Lemma \ref{lemma:expect=NW}}
\label{sec:expect=NW}
Let $\boldmath{X}$ be a vector of random variables obtained by the proposed algorithm, and let $(\bix,\biy)$ be a feasible solution of LRP. For any pair of $\{\kappa,\kappa'\} \in [\kappa_{\max}] \ (\kappa \not = \kappa')$ and any pair of $(p,q)\in N^2 \ (p \not =q)$, then  we have 
\begin{align}
&\sum_{i \in H_{\kappa}}\sum_{k \in H_{\kappa'}}{\rm E }[X_{pi}X_{qj}]  \leq \left(\sum_{i \in H_{\kappa}}x_{pi}\right) \left(\sum_{j \in H_{\kappa'}}x_{qj}\right) \notag \\
&=\max \{ \sum_{i \in H_{\kappa}}x_{pi},\sum_{j \in H_{\kappa'}}x_{pj}\}\dot \min \{ \sum_{i \in H_{\kappa}}x_{pi},\sum_{j \in H_{\kappa'}}x_{pj}\} \notag \\
& \leq \min \{ \sum_{i \in H_{\kappa}}x_{pi},\sum_{j \in H_{\kappa'}}x_{pj}\} (\because \sum_{i \in H }x_{pi}=1). \tag{6.21}
\end{align}
For any pair of $\{\kappa,\kappa'\} \in [\kappa_{\max}] \ (\kappa \not = \kappa')$ and any pair of $(p,q)\in N^2
\ (p \not =q)$, we have the following Hitchcock transportation problems :

\begin{alignat*}{4}
 \mbox{\rm {HTP}$(\boldmath{x}_p, \boldmath{x}_q, \hat{C})$: \quad}
& \mbox{\rm min.\quad }
&& 
		\sum_{i \in H_{\kappa}} \sum_{j \in H_{\kappa'}} \hat{c}_{ij} y_{piqj}  
	\\
& \mbox{\rm s.~t.}		
&&  \sum_{j \in H_{\kappa'}}   y_{piqj} = x_{pi}  & \quad & (\forall i \in H_{\kappa}), \\ 
&&& \sum_{i \in H_{\kappa}}  y_{piqj} = x_{qj}  & \quad & (\forall j \in H_{\kappa'}), \\ 
&&&  y_{piqj} \geq 0			&&(\forall (i,j) \in {H_{\kappa}} \times H_{\kappa'}).\\
\end{alignat*} 
We see that the north-west corner rule solution $\biy^{NW}=(y^{NW}_{piqj})$
  satisfies the equalities that 
\[
   \sum_{i \in H_{\kappa}}\sum_{j \in H_{\kappa'}} y^{NW}_{piqj}
   = \min \left\{
    \sum_{i \in H_{\kappa}} x_{pi}\;, \;\;\sum_{j \in H_{\kappa'}} x_{qj}
          \right\} \;\; 
    ( \forall  \{\kappa,\kappa'\} \in [\kappa_{\max}],\kappa \not = \kappa' ).
\]
From the equalities and inequality (6.21), we have
\[
\sum_{i \in H_{\kappa}}\sum_{k \in H_{\kappa'}}{\rm E }[X_{pi}X_{qj}] \leq  \sum_{i \in H_{\kappa}}\sum_{j \in H_{\kappa'}} y^{NW}_{piqj}.
\]
Thus, we have the desired result.

\subsection*{Proof of Lemma~\ref{lemma:x=>y}} \label{sec:x=>y}
Let $y^*$ be the vector obtained from an optimal solution of LRP $\bm{x}^*$ by \textbf{Algorithm~\ref{alg:x=>y}}.
Given any distinct pair of non-hubs $(p,q)\in N^2 \ (p\not =q$), 
we can see that  
 \[\sum_{j \in H}y_{piqj}^*=x_{pi}^* \ (\forall i \in H), \]
 \[\sum_{i \in H}y_{piqj}^*=x_{qj}^*\ (\forall j \in H), \]
\[\sum_{j \in H : j \not =i }y_{piqj}= \min \{0, x_{pi}-x_{qi}\} \ (\forall i \in H),\] 
\[\sum_{i \in H : i \not =j }y_{piqj}= \min \{0, x_{qj}-x_{pj}\} \ (\forall j \in H).\]

 Thus we have
\begin{align}
&\sum_{(i,j)\in H^2:i \not = j}(\ell_i+\ell_j)y_{piqj}^*=
\sum_{(i,j)\in H^2:i \not = j}\ell_iy_{piqj}^*+\sum_{(i,j)\in H^2:i \not = j}\ell_jy_{piqj}^* \notag \\
&=\sum_{i \in H}\ell_i \sum_{j \in H: j \not=i }y_{piqj}^*+\sum_{j \in H}\ell_j \sum_{i \in H:j \not = i} y_{piqj}^* \notag \\
&=\sum_{i \in H }\ell_i \min \{0, x_{pi}^*-x_{qi}^*\} +\sum_{j \in H}\ell_j \min \{0, x_{qj}^*-x_{pj}^*\} \notag \\
&= \sum_{i \in H }\ell_i \min \{0, x_{pi}^*-x_{qi}^* \} +\sum_{i \in H} \ell_i \min \{0, x_{qi}^*-x_{pi}^*\} \notag \\
&= \sum_{i \in H} \ell_i |x_{pi}^*-x^*_{qi} |. \notag
\end{align}
Then we have the desired result.
\vspace{2cm}

\begin{figure}[htpp]
\begin{center}
			\includegraphics[width=7cm]{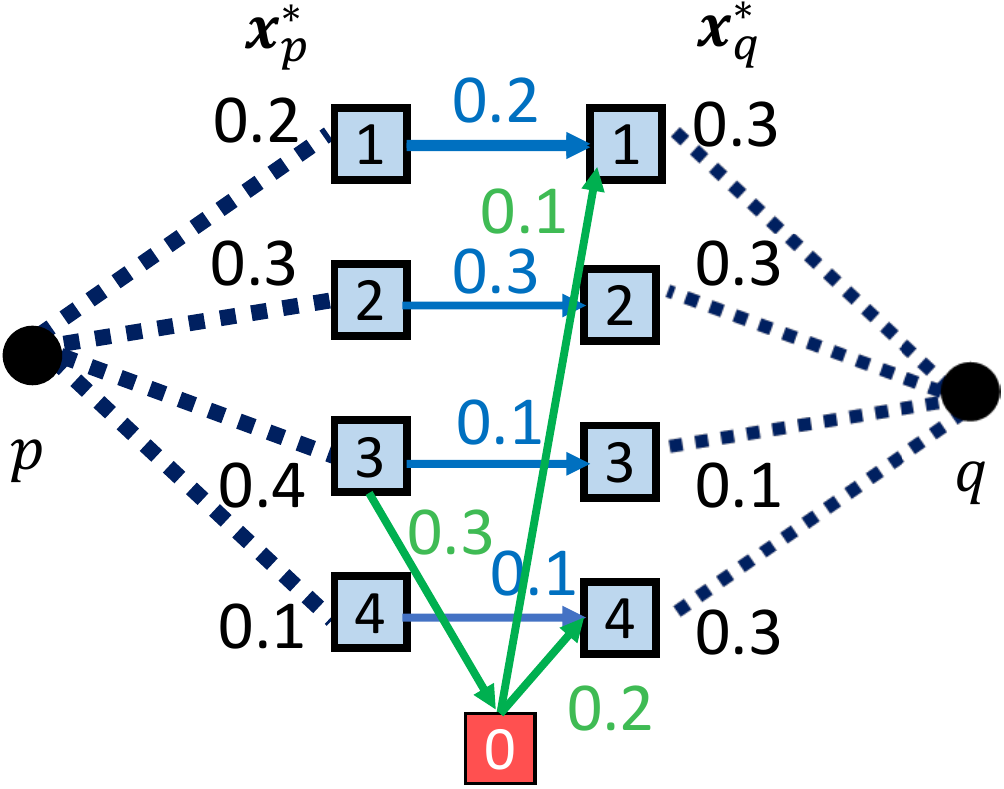}
			\caption{Hitchcock transportation problems with a depot $0$}
			\label{fig:hut_star}
\end{center}
\end{figure}

\setstretch{1.45}
\bibliographystyle{custom-plain}
\bibliography{main-biblio}

\addcontentsline{toc}{section}{Bibliography}

\end{document}